\documentclass[a4paper,11pt]{article}
\usepackage{amsmath}
\usepackage{amsfonts}
\usepackage{amssymb, amsthm}
\numberwithin{equation}{section}

\begin{document}

\title{Analyticity, Gevrey regularity and unique continuation for an integrable
multi-component 
peakon system with an arbitrary polynomial function}

\author{Qiaoyi Hu$^{1,2}$\footnote{e-mail:
huqiaoyi@scau.edu.cn },
~ Zhijun Qiao$^2$\footnote{e-mail: zhijun.qiao@utrgv.edu},
\\$^1$Department of Mathematics, \\South China
Agricultural University,\\Guangzhou, Guangdong 510642, PR China\\
$^2$School of Mathematical and Statistical Science,\\ University of Texas -- Rio Grande Valley,\\ Edinburg, TX 78539, USA\\
}

\date{}
\maketitle

\begin{abstract}
In this paper, we study the Cauchy problem for an integrable 
multi-component ($2N$-component) 
peakon system which is involved in an  arbitrary polynomial function.
 Based on a generalized Ovsyannikov type theorem,
 we first prove the existence and
uniqueness of solutions for the system in the Gevrey-Sobolev spaces with
 the lower bound of the lifespan. Then we show the continuity of the data-to-solution map for the system.
Furthermore, by introducing a
family of continuous diffeomorphisms of a line and utilizing the
fine structure of the system, we demonstrate  
the system exhibits unique continuation.

\end{abstract}

2000 Mathematics Subject Classification: 35G25, 35L05    \\

\textbf{Keywords:} Analyticity, Gevrey regularity, unique continuation, integrable system, peakon. 

\section{Introduction}
\newtheorem{theorem2}{Theorem}[section]
\newtheorem{lemma2}{Lemma}[section]
\newtheorem {remark2}{Remark}[section]
\newtheorem{corollary2}{Corollary}[section]
In 2013, Xia and Qiao proposed the following integrable $2N$-component  Camassa-Holm system ($2N$CH) \cite{XQ1}
\begin{equation}\left\{\begin{array}{ll}\label{mt}
m_{j,t}=(m_jH)_{x}+m_jH\\ ~~~~~~~~~+\frac{1}{(N+1)^2}
\sum_{i=1}^{N}[m_i(u_j-u_{jx})(v_i+v_{ix})+m_j(u_i-u_{ix})(v_i+v_{ix})],\\
n_{j,t}=(n_jH)_{x}-n_jH\\ ~~~~~~~~~-\frac{1}{(N+1)^2}
\sum_{i=1}^{N}[n_i(u_i-u_{ix})(v_j+v_{jx})+n_j(u_i-u_{ix})(v_i+v_{ix})],\\
m_{j}=u_{j}-u_{jxx},\\
n_{j}=v_{j}-v_{jxx},~~1\leq j\leq N.\end{array}\right.
\end{equation}
where $H$ is an arbitrary smooth function of $u_j , v_j , 1\leq j\leq N,$ and their derivatives. In this paper, we pick  $H$ as a polynomial 
function in variables of  $u_j , v_j , 1\leq j\leq N,$ and their derivatives.
In \cite{XQ1}, the authors also showed this system 
admits a Lax pair and infinitely many conservation laws, and in particular,
its bi-Hamiltonian structures and peaked soliton (peakon) solutions were derived.
Very recently, Zhang and Yin \cite{ZY}
proved that the $2N$CH system is locally
 well-posed in Besov spaces $B^s_{p,r}$ with $1\leq p,r\leq \infty$ and $s>\max\{1-\frac{1}{p},\frac{1}{p}\}$.

The $2N$CH system is very interesting due to an arbitrary function involved so that many integrable peakon systems could be generated from its reductions. As discussed in \cite{XQ1}, in comparison with the classical soliton equations, the peakon
equations with arbitrary functions are very rare. Within our knowledge,  the $2N$CH peakon system (\ref{mt}) could be the first peakon equation with an arbitrary function.
Let us list some of its reductions below.
For instance, letting $N = 1$, $v_1 = 2$ and $H = -u_1$,
leads to the following classical Camassa-Holm equation (CH)
\begin{align}\label{CH}
m_t + 2mu_x + m_xu = 0, m = u - u_{xx},
\end{align}
which models the unidirectional propagation of shallow water waves
over a flat bottom, where $u(t,x)$ stands for the fluid velocity at
time $t$ in the spatial $x$ direction \cite{C-H,C-L,J}. The CH equation is completely integrable in the sense of Lax pair and has a bi-Hamiltonian structure 
and infinite number of conservation laws \cite{C-H, F-F, C1}. Later, the CH equation was developed to study some analysis properties and solutions on a circle \cite{C-Mc}, and extended to a whole integrable hierarchy, called the CH hierarchy, ranging from negative order to positive order \cite{Qiao-CMP} as well as having algebraic-geometric solutions \cite{FH} on a symplectic submanifold through establishing a Neumann type constraint between eigenfunctions and the CH potential function \cite{Qiao-CMP}.
In the analysis study of the CH equation,  
the Cauchy problem is locally
well-posed \cite{C4,C-E2,L-O, Rb} for the initial data $u_{0}\in
H^{s}$ with $s>\frac{3}{2}$. Moreover,
it has not only global strong solutions modelling permanent waves
\cite{C-E2,CE7,CE5}, but also blow-up solutions modelling
wave breaking \cite{C2,C-E1, C-E2, C-E3,L-O,Rb}. In addition, another amazing property of the CH equation is that there exists globally weak solutions with initial data $u_0\in
H^1$, cf. \cite{B-C,C-M,X-Z}.

If choosing $N = 1$, $u_1 = v_1$ and $H = -\frac{1}
{2} (u^2_1
- u^2 _{1,x})$,  one may obtain the following cubic nonlinear Camassa-Holm equation or modified CH ($m$CH) equation
\begin{align}\label{mCH}
m_t + \frac{1}{2}m[(u^2+u_x^2)_x ]= 0, m = u - u_{xx},
\end{align}
which is also called the FORQ equation in the literature since it was developed independently in \cite{F-F,Olver,Qiao1,Qiao2}. It might be derived from the two dimensional
Euler equations, and its Lax pair and some particular weak solutions like cuspon and peaked solutions
were addressed in \cite{Qiao1,Qiao2}. 

Selecting $N = 1$, $H=-\frac{1}{2}(u_1v_1-u_{1x}v_{1x}), u=u_1,$ and $v=v_1$ sends  Eq. (\ref{mt}) to the following integrable two-component Camassa-Holm
system with both quadratic and cubic nonlinearity proposed by Xia and Qiao \cite{Xia1}
 \begin{equation}\label{2CH1}
\left\{\begin{array}{ll}m_{t}+\frac{1}{2}[m(uv-u_xv_x)]_x-\frac{1}{2}m(uv_x-u_xv)+bu_x=0, \\
n_{t}+\frac{1}{2}[n(uv-u_xv_x)]_x+\frac{1}{2}n(uv_x-u_xv)+bv_x=0,\\
m=u-u_{xx},~n=v-v_{xx}.
\end{array}\right.
\end{equation}
Geometrically, system (\ref{2CH1}) describes a nontrivial one-parameter family of
pseudo-spherical surfaces.
 Integrability,  bi-Hamiltonian structure, and infinitely many conservation laws of the system were already developed by Xia and Qiao \cite{Xia1}. In the case $b=0$ (dispersionless
case), the authors showed that this system admits the single-peakon of travelling wave solution as well
as multi-peakon solutions.  Recently, Yan, Qiao
and Yin \cite{Yan} studied the local well-posedness in nonhomogeneous Besov spaces
 $B^s_{p,r}$ ( $1\leq p,r\leq \infty$, $s>\max\{1-\frac{1}{p},\frac{1}{p},\frac{1}{2}\}$
  but $s\neq 1+\frac{1}{p}$) for the Cauchy problem of (\ref{2CH1}) and obtained a precise
blow-up scenario and a blow-up result for the strong solutions. Moreover, Zhang and Yin \cite{ZY} presented a further improved blow-up result
for the Cauchy problem of Eq. (\ref{2CH1}) with $b=0$ provided that the initial data $m_0,n_0\in H^s$ with $s>
\frac{1}{2}$.
Very recently,
the persistence properties and unique continuation for the dispersionless ($b=0$ )
two-component system (\ref{2CH1})
were investigated by Hu and Qiao \cite{hu}.

With $v = k_1u +k_2,$ Eq. (\ref{2CH1}) is able to be reduced to the generalized CH (gCH) equation \cite{Q-X}.
The gCH equation was first implied in the work of Fokas \cite{Fokas2}. Its Lax pair, bi-Hamiltonian structure, peakons, weak kinks, kink-peakon interaction, and
classical soliton solutions were investigated in \cite{X-Q-L}.

As $N = 1$, $H = -\frac{1}{2} (u_1 - u_{1,x})(v_1 + v_{1,x}),$ $u=u_1$, and $v=v_1$,
Eq. (1.1) reads as the following two-component cubic system proposed by Song, Qu and Qiao (SQQ) \cite{SQQ}:
\begin{equation}\label{2CH2}
\left\{\begin{array}{ll}m_{t}+\frac{1}{2}[m(u-u_x)(v+v_x)]_x=0 \\
n_{t}+\frac{1}{2}[n(u-u_x)(v+v_x)]_x=0\\
m=u-u_{xx},~n=v-v_{xx}.
\end{array}\right.
\end{equation}
The SQQ equation was shown integrable in the sense of both Lax-pair and
geometry, which describes a pseudo-spherical surface. And some explicit solutions like cuspons and W/M-shape peaks solitons were discussed in \cite{SQQ} as well.
Recently, the local well-posedness
and the blow-up phenomena
for the Cauchy problem of the SQQ system were studied by Yan, Qiao and Zhang \cite{YQZ}.
Thereafter, Zhang and Yin \cite{ZY}
provided the global existence and
several new blow-up results
for the Cauchy problem of Eq. (\ref{2CH2}) through some new conservation laws they found.

The well-posedness results mentioned above are usually given in Sobolev spaces $H^s$ or Besov spaces $B^s_{p,r}$. However, the studies of analyticity and Gevrey regularity for nonlinear wave equations have already attracted lot of attention and interest. 
There have been plenty of literature concerning these issues.
For instance, the hydrodynamics of Euler equations was initiated by Ovsyannikov \cite{O1,O2} and later developed with a further study in \cite{CE2,Nr,Ns,Tre1} and in \cite{B-G1,B-G2} where the approach is based on a contraction type argument in a suitable scale of Banach spaces.
Constantin and Escher \cite{C-E6} studied the analyticity of periodic traveling free
surface water waves with vorticity.
For the classical
Camassa-Holm equation on the circle, a result similar to Theorem 3.1 in our paper was proved in \cite{H-M}, but without an analytic
lifespan estimate like (\ref{TT}).  The analyticity of the Cauchy problem for two-component Camassa-Holm
shallow water
was studied in \cite{YY1}.
Recently, Barostichi, Himonas and Petronilho \cite{B-H-P1} established the well-posedness for a class of nonlocal evolution equations 
in spaces of analytic functions. Furthermore, they proved a Cauchy-Kovalevsky theorem for a so-called generalized Camassa-Holm equation in \cite{B-H-P}.
Very recently, Luo and Yin \cite{Luo} investigated the Gevrey regularity and analyticity for a class of Camassa-Holm type systems including a three-component Camassa-Holm system
 derived by Geng and Xue
 \cite{GX}, a two-component shallow water system
 proposed by Constantin and Ivanov
  \cite{C-I}, and a modified two-component Camassa-Holm system found by Holm, Naraigh and Tronci \cite{HNT}.

To the best of our knowledge, the analyticity and the Gevrey regularity for the $2N$CH system (\ref{mt}) with an arbitrary
function has not been studied yet. In this paper, our major goal aims to construct
 the well-posedness of the Cauchy problem for the $2N$CH system (\ref{mt}) in a suitable
 Gevrey-Sobolev spaces. 
Our approach is mainly motivated from the Cauchy-Kovalevsky type results in \cite{B-G2,B-H-P1,Luo}.
As discussed above, different
 choices of $H$ in the $2N$CH peakon system (\ref{mt}) yield various amazing CH type equations, thus
Eq. (\ref{mt})  deserves a deep investigation.
 Employing the good symmetric structure of the $2N$CH system (\ref{mt}) and a
family of continuous diffeomorphisms to the line,
 we may also 
study the unique continuation of the system with $N=1$.


\par
The whole paper is organized as follows.
In Section 2, 
we briefly give 
preliminary results including the abstract Ovsyannikov type theorem and the
basic properties of the Gevrey-Sobolev space $G^{\delta,s,b}$ (see Eqs. (2.2) and (2.3)).
In Section 3,  
we prove the existence and
uniqueness of Gevrey regularity of the solution to the
$2N$CH system with an estimate about the lifespan.
In Section 4, we study the continuity of the data-to-solution
map in spaces of Gevrey class functions, and present the local well-posedness
in Gevrey-Sobolev spaces for two
integrable peakon systems with arbitrary
functions. In the last section, we show 
that the $2N$CH system exhibits the unique continuation if $N=1$. Particularly, the unique continuation
for the SQQ system ({\ref{2CH2}) is illustrated.

\section{Preliminaries}
\newtheorem{theorem3}{Theorem}[section]
\newtheorem{lemma3}{Lemma}[section]
\newtheorem {remark3}{Remark}[section]
\newtheorem{corollary3}{Corollary}[section]
\newtheorem{definition3}{Definition}[section]
In this section,
we first present an abstract generalized Ovsyannikov type theorem in a scale of decreasing Banach spaces. Through recalling various analytic and Gevrey class norms, we then define the Gevrey-Sobolev spaces $G^{\delta,s,b}$. Furthermore, some useful properties of $G^{\delta,s,b}$ are discussed.

Let us first recall the definition of a scale of Banach spaces $\{X_{\delta}\}_{0<\delta\leq 1}$.
\begin{definition3}
A scale of Banach spaces is defined as a one-parameter family of
Banach spaces $\{X_{\delta}\}_{0<\delta\leq 1}$ satisfying

(Scale): For any $0 <\delta'< \delta \leq  1$
\begin{equation}
X_{\delta}\subset X_{\delta'},~~\|\cdot\|_{\delta'}\leq \|\cdot\|_{\delta}.
\end{equation}
\end{definition3}

We then provide a brief description of a generalized autonomous
Ovsyannikov theorem \cite{Luo}, which came up from \cite{O1,O2}.

Given a decreasing scale of Banach spaces $\{X_{\delta}\}_{0<\delta\leq 1}$ and initial data $w_0\in  X_1$, let us consider the Cauchy problem
\begin{equation}\label{Fu}
\frac{dw}{dt}=F(w),~~w(0)=w_0
\end{equation}
where $F:~X_{\delta}\rightarrow X_{\delta'}$ satisfies the following conditions.

(1). Let $F:~X_{\delta}\rightarrow X_{\delta'}$ be a function. Then for any given $w_0\in  X_1$ and $R>0$ there exist positive constants $L$ and $M$, depending on $w_0$ and $R$, such that for all
$0 < \delta' < \delta < 1$ and $w_1,w_2 \in X_{\delta}$ with $\|w_1-w_0\|_{\delta}<R$ and $\|w_2-w_0\|_{\delta}<R$, $F$ satisfies 
the following Lipschitz type condition
\begin{equation}\label{con1}
\|F(w_1)-F(w_2)\|_{\delta'}\leq \frac{L}{(\delta-\delta')^b}\|w_1-w_2\|_{\delta},
\end{equation}
with the following bound for the $X_{\delta}$ norm of $F(w_0)$
\begin{equation}\label{con2}
\|F(w_0)\|_{\delta}\leq \frac{M}{(1-\delta)^b},~0<{\delta}<1.
\end{equation}

(2). Let $0<\delta'<{\delta}<1$ and $T>0$. If the function $t\mapsto w(t)$ is holomorphic on $\{t \in \mathbb{C} :
|t| < T\}$ with values in $X_{\delta}$ and $\sup_{|t| <T} \|w(t) - w_0\|_{\delta} < R$, then the function
$t \mapsto F(w(t))$ is holomorphic on $\{t \in \mathbb{C} :
|t| < T\}$ with values in $X_{\delta'}.$

Let us now present the generalized Ovsyannikov theorem.
\begin{theorem3}\cite{Luo}\label{abstract}
Let $R > 0$, $b\geq1$ and $T> 0$. Assume that the scale of Banach spaces
$X_{\delta}$ and the function $F(w)$ satisfy the above conditions (1) and (2). Then, for given $w_0 \in X_1$, there exists a positive number $T'$ such that $0<T'<T$ with the lower bound of the lifespan given by
\begin{equation}\label{TT}
T' =\min\{\frac{(2^{b}-1)R}{(2^{b}-1)2^{2b+3}LR + M},\frac{1}{2^{2b+4}L}\},
\end{equation}
and a unique solution $u(t)$ to the Cauchy problem (\ref{Fu}), which is a holomorphic
function for every $ \delta \in (0, 1)$ in the disc $D(0, \frac{T'(1 - \delta)^b}{2^b-1})$ valued in $X_{\delta}$.
\end{theorem3}

\begin{remark3}
As illustrated in \cite{Luo}, the generalized Ovsyannikov theorem 2.1 covers the classical Ovsyannikov theorem as its special case (with $b=1$) \cite{Tre1,Tre2,Tre3}. Various versions of the Cauchy-Kovalevsky theorem were established by Baouendi and Goulaouic \cite{B-G1,B-G2}, Nirenberg \cite{Nr}, and
Nishida \cite{Ns}.
\end{remark3}

Next we recall some analytic and Gevrey-class norms. The classical definition is given as follow.
\begin{definition3}\label{def1} 
A $C^{\infty}(\mathbb{T}^d)$ function $f$ is in the Gevrey-class $b$,
for some $b > 0$ if there exist $M, \delta > 0$ such that
$$|\partial^{s}f|\leq M \frac{s!^b}{\delta^{|s|}},$$
for all $x \in \mathbb{T}^d$, and all multi-indices $s$.
$\delta$ is the radius of Gevrey-class
regularity of the function $f$.
\end{definition3}
\begin{remark3}
If $0 < b < 1$, then $f$ is called an ultra-analytic function. If
$b = 1$, $f$ recovers the usual analytic function and $\delta$ is (up to a dimensional constant) the radius of convergence of the Taylor
series at each point.
 If $b > 1$, then $f$ is called the Gevrey-class $b$ function, which consists of $C^{\infty}$ function but is not
analytic.
\end{remark3}

It is however more convenient in PDEs to use an equivalent characterization, introduced
by Foias and Temam \cite{FT} to address the analyticity of solutions of the Navier-Stokes equations.
\begin{definition3} \cite{FT}\label{def2}
For all $b \geq 1$ the Gevrey-class $b$ is given by
\begin{align*}
\bigcup\limits_{\delta  > 0} {D(\Lambda ^s e^{\delta \Lambda ^{1/b} } )}
\end{align*}
for any $s \geq 0$, where
\begin{align}
\|\Lambda ^s e^{\delta \Lambda ^{1/b} } f\|_{L^2}^2=(2\pi)^3\sum\limits_{k \in Z^d }^{} {|k|^{2s} e^{2\delta |k|^{1/b}  }|\hat{f}(k)|^2 },
\end{align}
where $\widehat{f}$ is the Fourier coefficient of $f$ in $\mathbb{T}^d$.
\end{definition3}

In order to obtain the local well-posedness result for the Cauchy problem of the $2N$CH system (1.1) by using Theorem \ref{abstract}, we are now introduce the framework of the Gevery class spaces $G^{\delta,s,b}$ by combining the above two definitions \ref{def1} and \ref{def2}.
In the following contexts, we denote by $\widehat{f}$ the Fourier transform of $f$ in $\mathbb{R}$ or $\mathbb{T}$.
The Gevrey class spaces $G^{\delta,s,b}$ are defined below:
\begin{equation}\label{GT}
G^{\delta,s,b}(\mathbb{T})=\{f\in C^{\infty}(\mathbb{T}): \|f\|_{G^{\delta,s,b}(\mathbb{T})}^2:=\sum_{k\in \mathbb{Z}}(1+|k|^{2})^{s}e^{2\delta|k|^{\frac{1}{b}}}|\hat{f}(k)|^2<\infty\},
\end{equation}
where $\delta > 0$, $s \geq 0$, and $b>0$ for the periodic case.
While in the non-periodic case, the 
Banach spaces $G^{\delta,s,b}$ are defined by
\begin{equation}\label{GR}
G^{\delta,s,b}(\mathbb{R})=\{f\in C^{\infty}(\mathbb{R}): \|f\|_{G^{\delta,s,b}(\mathbb{R})}^2:=\int_{\mathbb{R}}(1+|\xi|^{2})^{s}e^{2\delta|\xi|^{\frac{1}{b}}}|\hat{f}(\xi)|^2d\xi<\infty\}.
\end{equation}

\begin{remark3}
If $f\in  G^{\delta,s,b}$, then as per the above definitions, the following properties hold. 

(i) If $0 <\delta'< \delta $, $s \geq 0$ and $b > 0$, then $\|\cdot\|_{\delta',s,b}\leq \|\cdot\|_{\delta,s,b},$ i.e. $G^{\delta,s,b}\hookrightarrow G^{\delta',s,b}$.

(ii) If $0 \leq s'< s $, $\delta > 0$ and $b > 0$, then $\|\cdot\|_{\delta,s',b}\leq \|\cdot\|_{\delta,s,b},$ i.e. $G^{\delta,s,b}\hookrightarrow G^{\delta,s',b}$.

(iii) If $0 <b'< b $, $\delta > 0$ and $s \geq 0$, then $\|\cdot\|_{\delta,s,b}\leq \|\cdot\|_{\delta,s,b'},$ i.e. $G^{\delta,s,b'}\hookrightarrow G^{\delta,s,b}$.
\end{remark3}

\begin{remark3}
By Remark 2.3 (i), one may easily see that 
$$\{G^{\delta,s,b}\}_{0< \delta <1}, ~~with~ norm \| \cdot \|_{\delta,s,b}$$
form a scale of decreasing Banach spaces.
\end{remark3}

In the following contexts, if no specified otherwise,
$\|\cdot\|_{\delta,s,b}$ stands for the norm, and $G^{\delta,s,b}$ for the space in both periodic and non-periodic cases when a result holds for both cases.






Finally, let us summarize some basic properties
of the $G^{\delta,s,b}$ spaces, which were mentioned in \cite{Luo}.


\begin{lemma3}
If $0 <\delta'< \delta \leq  1$, $s \geq 0$, $b>0$ and $f\in  G^{\delta,s,b}$, then
\begin{align}
\label{2.4}&\|f_x\|_{\delta',s,b}\leq \frac{e^{-b}b^b}{(\delta-\delta')^b}\|f\|_{\delta,s,b}\\
&\|f_x\|_{\delta,s,b}\leq \|f\|_{\delta,s+1,b}\\
\label{2.6}&\|(1-\partial_x^2)^{-1}f\|_{\delta,s+2,b}=\|f\|_{\delta,s,b}\\
&\|(1-\partial_x^2)^{-1}f\|_{\delta,s,b}\leq \|f\|_{\delta,s,b}\\
&\|\partial_x(1-\partial_x^2)^{-1}f\|_{\delta,s,b}\leq \|f\|_{\delta,s,b}
\end{align}
\end{lemma3}

\begin{lemma3} (Algebraic property)

If $0 <\delta< 1 $, $s>\frac{1}{2}$, $b>0$ and $f,g\in  G^{\delta,s,b}$, then we have
\begin{equation}\label{algebra}
\|fg\|_{\delta,s,b}\leq C_s \|f\|_{\delta,s,b}\|g\|_{\delta,s,b}.
\end{equation}
\end{lemma3}




\section{Existence and uniqueness}
\newtheorem{theorem4}{Theorem}[section]
\newtheorem{lemma4}{Lemma}[section]
\newtheorem {remark4}{Remark}[section]
\newtheorem{corollary4}{Corollary}[section]
\par
In this section, we consider the Cauchy problem for the multi-component integrable Camassa-Holm system (1.1).
In the following contexts, constant $C$ refers to a generic positive constant,
which may change from one to another.

Let $U=(u_1,...,u_N)^T$, $V=(v_1,...,v_N)^T$, $U_t=(u_{1,t},...,u_{N,t})^T$,
 $V_t=(v_{1,t},...,v_{N,t})^T$, $U_x=(u_{1,x},...,u_{N,x})^T$,
and $V_x=(v_{1,x},...,v_{N,x})^T$ denote the vector functions and their partial derivatives with respective to the time $t$ and spacial variable $x$. In our current work, we focus on the Hamiltonian $H=H(U,V,U_x,V_x)$ in the system (1.1) being a polynomial function with degree $k$, which of course does not affect integrability of (1.1).
Let
\begin{align}\label{fj}
\nonumber F_j&:=(m_jH)_{x}+m_jH\\&~~~~~~+\frac{1}{(N+1)^2}
\sum_{i=1}^{N}[m_i(u_j-u_{jx})(v_i+v_{ix})+m_j(u_i-u_{ix})(v_i+v_{ix})]
\end{align}
 and
\begin{align}\label{gj}
\nonumber G_j&:=(n_jH)_{x}-n_jH\\&~~~~~~-\frac{1}{(N+1)^2}
\sum_{i=1}^{N}[n_i(u_i-u_{ix})(v_j+v_{jx})+n_j(u_i-u_{ix})(v_i+v_{ix})].
\end{align}
As per $m_j=(1-\partial_x^2)u_j$ and $n_{j}=(1-\partial_x^2)v_{j}$,  a direct
calculation makes one propose the following Cauchy problem for Eq.(1.1):
\begin{equation}\left\{\begin{array}{ll}\label{mtivp}
u_{j,t}=(1-\partial_x^2)^{-1}F_j,\\
n_{j,t}=(1-\partial_x^2)^{-1}G_j,\\
u_{j}(0,x)=u_{j,0},\\
v_{j}(0,x)=v_{j,0},~~1\leq j\leq N.\end{array}\right.
\end{equation}

To apply the generalized abstract Ovsyannikov Theorem 2.1 conveniently,  let us rewrite the system in the following matrix form:
$W=\left( \begin{array}{l}
 U \\
 V \\
 \end{array} \right)$, $W_x=\left( \begin{array}{l}
 U_x \\
 V_x \\
 \end{array} \right)$, $U_0=(u_{1,0},...,u_{n,0})^T$, $V_0=(v_{1,0},...,v_{n,0})^T$, $W_0=\left( \begin{array}{l}
 U_0 \\
 V_0 \\
 \end{array} \right)$, $F=(F_1,...F_n)^T,~~G=(G_1,...G_n)^T$, $P=\left( \begin{array}{l}
 F \\
 G \\
 \end{array} \right)$ and , then (\ref{mtivp}) reads as
\begin{equation}\left\{\begin{array}{ll}\label{mtivp1}
W_t=(1-\partial_x^2)^{-1}P(W,W_x),\\
W(0,x)=W_{0}.
\end{array}\right.
\end{equation}

For any $2N-$component vector function $W=(u_1,u_2,...,u_N,v_1,v_2,...,v_N)^T\in (G^{\delta,{s},b})^{2N}$, we define its norm as follows:
\begin{align}\label{W}
\|W\|_{\delta,s,b}=\sum_{j=1}^{N}(\|u_{j}\|_{\delta,s,b}+\|v_{j}\|_{\delta,s,b}).
\end{align}
For our convenience, 
let us assume that
 the initial data $W_0\in (G^{1,{s+2},b})^{2N}$, then we have the following theorem for the existence and uniqueness of Eq. (\ref{mtivp1}).

\begin{theorem3}\label{local}
Let $s > \frac{1}{2},~~b\geq 1.$ If $W_0 \in (G^{1,{s+2},b})^{2N}$,
then there exists a positive
time $T'$, which depends on the initial data $W_0$, such that for every $\delta \in (0, 1),$ the Cauchy
problem (\ref{mtivp1}) has a unique solution $W(t)=(u_1(t),...,u_N(t),v_1(t),...,v_N(t))^T$, which is a holomorphic vector function in the disc $D(0, \frac{T'(1-\delta)^b}{2^b-1})$
valued
in $(G^{\delta,{s},b})^{2N}$,
 where 
 $T'\approx \frac{1}{\|W_0\|_{\delta,s+2,b}^k+\|W_0\|_{\delta,s+2,b}^2}$.
\end{theorem3}

For given $s\geq 0$ and $b> 0$, Remark 2.3 ensures $G^{\delta,s,b}$ satisfying the scale decreasing condition (2.1) like the spaces $X_{\delta}$ in the generalized
Ovsyannikov theorem. Also, these spaces and $(1-\partial_x^2)^{-1}P(W,W_x)$ satisfy the condition (2).
Therefore, to prove
Theorem 3.1 it suffices to show that the right-hand side of Eq. (\ref{mtivp1}) satisfies conditions (\ref{con1}) and (\ref{con2}), which are shown in the following crucial Lemma.

\begin{lemma4} Let $s > \frac{1}{2}$,  $b\geq 1$, $R > 0$,   and $W_0 \in (G^{1,s+2,b})^{2N}$.
 Then, for the Cauchy problem (\ref{mtivp1}) there exist two positive constants $L$ and $M$,
 which depend on $R$ and $\|W_0\|_{1,s+2,b}$,
such that for $W_1,W_2 \in (G^{1,s+2,b})^{2N}$, $\|W_{1}-W_0\|_{\delta,s+2,b} < R,~ \|W_2-W_0\|_{\delta,s+2,b} < R $
and $0 < \delta'<\delta <1$ we have
\begin{equation}\label{P1P2}
\|(1-\partial_x^2)^{-1}[P(W_1,W_{1,x}) - P(W_2,W_{2,x})]\|_{\delta',s+2,b}\leq \frac{L}{(\delta-\delta')^b} \|W_1-W_2\|_{\delta,s+2,b}
\end{equation}
and
\begin{equation}\label{P0}
\|(1-\partial_x^2)^{-1}P(W_0,W_{0,x})\|_{\delta,s+2,b}\leq \frac{M}{(1-\delta)^b}.
\end{equation}
Furthermore, the lower bound of the lifespan $T'$ can be estimated through
\begin{equation}
T'\approx \frac{1}{\|W_0\|_{\delta,s+2,b}^k+\|W_0\|_{\delta,s+2,b}^2}.
\end{equation}
\end{lemma4}

\begin{proof}
To prove (\ref{P1P2}) and (\ref{P0}), we just need to prove for $1\leq j \leq N$,
$(1-\partial_x^2)^{-1}F_j$ and $(1-\partial_x^2)^{-1}G_j$ satisfy (\ref{con1}) and (\ref{con2}).

Let $W_1=(u_1^{(1)},...,u_N^{(1)},v_1^{(1)},...,v_N^{(1)})^T$ and
$W_2=(u_1^{(2)},...,u_N^{(2)},v_1^{(2)},...,v_N^{(2)})^T$. For $j=1,...,N, k=1,2$, we denote
 \begin{align*}
~~~~~m_j^{(k)}=(1-\partial_x^2)u_j^{(k)},
\end{align*}
$$
f_j^{(k)}:=\frac{1}{(N+1)^2}
\sum_{i=1}^{N}[m_i^{(k)}(u_j^{(k)}-u^{(k)}_{jx})(v^{(k)}_i+v^{(k)}_{ix})+m^{(k)}_j
(u^{(k)}_i-u^{(k)}_{ix})(v^{(k)}_i+v^{(k)}_{ix})],
$$
and $H^{(k)}$ are the polynomial functions of $W_k$ and its derivatives $W_{k,x}$.
Applying the triangle inequality leads to 
\begin{align*}
~~~~~\|(1-\partial_x^2)^{-1}[F_j(W_1,W_{1,x})-F_j(W_2,W_{2,x})\|_{\delta',s+2,b}
\leq I+II+III,
\end{align*}
where
\begin{align*}
~~~~~I&=\|(1-\partial_x^2)^{-1}(m_j^{(1)}H^{(1)}-m_j^{(2)}H^{(2)})_x\|_{\delta',s+2,b}\\
~~~~~II&=\|(1-\partial_x^2)^{-1}(m_j^{(1)}H^{(1)}-m_j^{(2)}H^{(2)})\|_{\delta',s+2,b}\\
~~~~~III&=\|(1-\partial_x^2)^{-1}(f_j^{(1)}-f_j^{(2)})\|_{\delta',s+2,b}\\
\end{align*}
To estimate the above three terms, let us establish the following inequalities.
For any $1\leq j\leq N$, we have
\begin{align}\label{mj}
\nonumber\|m_j^{(i)}\|_{\delta,s,b}&=\|(1-\partial_x^2)u_j^{(i)}\|_{\delta,s,b}\\
&=\|u_j^{(i)}\|_{\delta,s+2,b}\leq \|W_i\|_{\delta,s+2,b},i=1,2.
\end{align}
\begin{align}\label{mj1mj2}
\nonumber\|m_j^{(1)}-m_j^{(2)}\|_{\delta,s,b}&=\|(1-\partial_x^2)(u_j^{(1)}-u_j^{(2)})\|_{\delta,s,b}\\
&=\|u_j^{(1)}-u_j^{(2)}\|_{\delta,s+2,b}\leq \|W_1-W_2\|_{\delta,s+2,b}.
\end{align}
Since $H=H(W,W_x)$ is a polynomial function with degree $k$, we may obtain 
\begin{align}\label{H1H2}
\nonumber\|H^{(1)}-H^{(2)}\|_{\delta,s,b}&\leq C(\|W_1-W_2\|_{\delta,s,b}+\|W_{1,x}-W_{2,x}\|_{\delta,s,b})\\
\nonumber&~~(\|W_1\|^{k-1}_{\delta,s,b}+\|W_2\|^{k-1}_{\delta,s,b}+\|W_{1,x}\|^{k-1}_{\delta,s,b}
+\|W_{2,x}\|^{k-1}_{\delta,s,b})\\
\nonumber &\leq C(\|W_1-W_2\|_{\delta,s,b}+\|W_{1}-W_{2}\|_{\delta,s+1,b})\\
\nonumber&~~(\|W_1\|^{k-1}_{\delta,s,b}+\|W_2\|^{k-1}_{\delta,s,b}+\|W_{1}\|^{k-1}_{\delta,s+1,b}
+\|W_{2}\|^{k-1}_{\delta,s+1,b})\\
&\leq C\|W_1-W_2\|_{\delta,s+2,b}(\|W_{1}\|^{k-1}_{\delta,s+2,b}+\|W_{2}\|^{k-1}_{\delta,s+2,b}).
\end{align}
Casting $H^{(1)}=0$ or $H^{(2)}=0$ in (\ref{H1H2}) gives
 \begin{align}
 \|H^{(i)}\|_{\delta,s,b}\leq C \|W_i\|^k_{\delta,s+2,b},i=1,2.
\end{align}
For $i=1,2$, we have
\begin{align}\label{uj}
\nonumber\|W_i\|_{\delta,s+2,b}&\leq \|W_i-W_{0}\|_{\delta,s+2,b}+\|W_{0}\|_{\delta,s+2,b}\\
&\leq R+\|W_{0}\|_{1,s+2,b}.
\end{align}

Let us now estimate $I,\ II$, and $III$. Eqs. (\ref{mj})-(\ref{uj}), (\ref{2.4}), (\ref{2.6}) and (\ref{algebra}) lead to
\begin{align}\label{I}
\nonumber I&=\|(1-\partial_x^2)^{-1}(m_j^{(1)}H^{(1)}-m_j^{(2)}H^{(2)})_x\|_{\delta',s+2,b}\\
\nonumber &=\|(m_j^{(1)}H^{(1)}-m_j^{(2)}H^{(2)})_x\|_{\delta',s,b}\\
\nonumber&\leq \frac{e^{-b}b^b}{(\delta-\delta')^b}\|m_j^{(1)}H^{(1)}-m_j^{(2)}H^{(2)}\|_{\delta,s,b}\\
\nonumber&\leq \frac{e^{-b}b^b}{(\delta-\delta')^b} C_s(\|m_j^{(1)}\|_{\delta,s}\|H^{(1)}-H^{(2)}\|_{\delta,s,b}+
\|H^{(2)}\|_{\delta,s,b}\|m_j^{(1)}-m_j^{(2)}\|_{\delta,s})\\
\nonumber&\leq\frac{C}{(\delta-\delta')^b} [\|W_1\|_{\delta,s+2,b}\|W_1-W_2\|_{\delta,s+2,b}
(\|W_1\|^{k-1}_{\delta,s+2,b}+\|W_2\|^{k-1}_{\delta,s+2,b})\\
\nonumber&~~+\|W_2\|^{k}_{\delta,s+2,b}\|W_1-W_2\|_{\delta,s+2,b}]\\
\nonumber&\leq\frac{C}{(\delta-\delta')^b}(\|W_1\|_{\delta,s+2,b}+\|W_2\|_{\delta,s+2,b})^{k}
\|W_1-W_2\|_{\delta,s+2,b}\\
&\leq \frac{C}{(\delta-\delta')^b}(R+\|W_0\|_{1,s+2,b})^{k}
\|W_1-W_2\|_{\delta,s+2,b}.
\end{align}

Employing Eqs. (\ref{mj})-(\ref{uj}), (\ref{2.6}) and (\ref{algebra}) yields
\begin{align}\label{II}
\nonumber II&=\|(1-\partial_x^2)^{-1}(m_j^{(1)}H^{(1)}-m_j^{(2)}H^{(2)})\|_{\delta',s+2,b}\\
\nonumber &=\|m_j^{(1)}H^{(1)}-m_j^{(2)}H^{(2)}\|_{\delta',s,b}\\
\nonumber&\leq C_s(\|m_j^{(1)}\|_{\delta,s}\|H^{(1)}-H^{(2)}\|_{\delta,s,b}+
\|H^{(2)}\|_{\delta,s,b}\|m_j^{(1)}-m_j^{(2)}\|_{\delta,s})\\
\nonumber&\leq C [\|W_1\|_{\delta,s+2,b}\|W_1-W_2\|_{\delta,s+2,b}
(\|W_1\|^{k-1}_{\delta,s+2,b}+\|W_2\|^{k-1}_{\delta,s+2,b})\\
\nonumber&~~+\|W_2\|^{k}_{\delta,s+2,b}\|W_1-W_2\|_{\delta,s+2,b}]\\
\nonumber&\leq C (\|W_1\|_{\delta,s+2,b}+\|W_2\|_{\delta,s+2,b})^{k}
\|W_1-W_2\|_{\delta,s+2,b}\\
\nonumber&\leq C(R+\|W_0\|_{\delta,s+2,b})^{k}
\|W_1-W_2\|_{\delta,s+2,b}\\
&\leq \frac{C}{(\delta-\delta')^b}(R+\|W_0\|_{1,s+2,b})^{k}
\|W_1-W_2\|_{\delta,s+2,b},
\end{align}
where we use $0<\delta'<\delta<1$ and $b\geq1$, which implies $0<(\delta-\delta')^b<1$.

Let
\begin{align}\label{fj}
\nonumber f_j(W,W_x)&:=\frac{1}{(N+1)^2}
\sum_{i=1}^{N}[m_i(u_j-u_{jx})(v_i+v_{ix})+m_j(u_i-u_{ix})(v_i+v_{ix})]\\
&=l_j+z_j, \ 1\leq j \leq N,
\end{align}
where 
\begin{align}\label{lj}
l_j(W,W_x):=\frac{1}{(N+1)^2}
\sum_{i=1}^{N}[m_i(u_j-u_{jx})(v_i+v_{ix})],
\end{align}
and
\begin{align}\label{zj}
z_j(W,W_x):=\frac{1}{(N+1)^2}
\sum_{i=1}^{N}[m_j(u_i-u_{ix})(v_i+v_{ix})].
\end{align}
To estimate $III$, let us first give some essential inequalities below:
\begin{align}\label{l1l2}
\nonumber&~~~~\|l_j^{(1)}-l_j^{(2)}\|_{\delta',s,b}\\
\nonumber&=\|\frac{1}{(N+1)^2}
\sum_{i=1}^{N}[m_i^{(1)}(u_j^{(1)}-u_{jx}^{(1)})(v_i^{(1)}+v_{ix}^{(1)})
-m_i^{(2)}(u_j^{(2)}-u_{jx}^{(2)})(v_i^{(2)}+v_{ix}^{(2)})]\|_{\delta,s,b}\\
\nonumber&\leq C\sum_{i=1}^{N}[\|m_i^{(1)}\|_{\delta,s,b}\|u_j^{(1)}-u_{jx}^{(1)}\|_{\delta,s,b}
(\|v_i^{(1)}-v_{i}^{(2)}\|_{\delta,s,b}+\|v_{i,x}^{(1)}-v_{ix}^{(2)}\|_{\delta,s,b})\\
\nonumber&~~~~~~~~~~~+\|m_i^{(1)}\|_{\delta,s,b}\|v_i^{(2)}-v_{ix}^{(2)}\|_{\delta,s,b}
(\|u_j^{(1)}-u_{j}^{(2)}\|_{\delta,s,b}+\|u_{j,x}^{(1)}-u_{jx}^{(2)}\|_{\delta,s,b})\\
\nonumber&~~~~~~~~~~~+\|u_j^{(2)}-u_{jx}^{(2)}\|_{\delta,s,b}
\|v_i^{(2)}+v_{i,x}^{(2)}\|_{\delta,s,b}\|m_{i}^{(1)}-m_{i}^{(2)}\|_{\delta,s,b}]\\
\nonumber &\leq C\sum_{i=1}^{N}[\|u_i^{(1)}\|_{\delta,s+2,b}(\|u_j^{(1)}\|_{\delta,s,b}+
\|u_{jx}^{(1)}\|_{\delta,s,b})
(\|W_1-W_{2}\|_{\delta,s,b}+\|W_{1,x}-W_{2x}\|_{\delta,s,b})\\
\nonumber&~~~~~~~~~~~+\|u_i^{(1)}\|_{\delta,s+2,b}(\|v_i^{(2)}\|_{\delta,s,b}
+\|v_{ix}^{2}\|_{\delta,s,b})
(\|W_1-W_{2}\|_{\delta,s,b}+\|W_{1,x}-W_{2x}\|_{\delta,s,b})\\
\nonumber&~~~~~~~~~~~+(\|u_j^{(2)}\|_{\delta,s,b}+\|u_{jx}^{(2)}\|_{\delta,s,b})
(\|v_i^{(2)}\|_{\delta,s,b}+\|v_{i,x}^{(2)}\|_{\delta,s,b})
\|u_{i}^{(1)}-u_{i}^{(2)}\|_{\delta,s+2,b}]\\
\nonumber&\leq C(\|W_1\|_{\delta,s+2,b}+\|W_2\|_{\delta,s+2,b})^2\|W_1-W_2\|_{\delta,s+2,b}\\
&\leq \frac{C}{(\delta-\delta')^b}(R+\|W_0\|_{1,s+2,b})^2\|W_1-W_2\|_{\delta,s+2,b}, \ 1\leq j \leq N.
\end{align}
Similarly, we have
\begin{align}\label{z1z2}
\|z_j^{(1)}-z_j^{(2)}\|_{\delta',s,b}\leq \frac{C}{(\delta-\delta')^b}(R+\|W_0\|_{1,s+2,b})^2\|W_1-W_2\|_{\delta,s+2,b},~~1\leq j\leq N.
\end{align}

Hence, (\ref{2.6}), (\ref{l1l2}) and (\ref{z1z2}) imply
\begin{align}\label{III}
\nonumber III&=\|(1-\partial_x^2)^{-1}(f_j^{(1)}-f_j^{(2)})\|_{\delta',s+2,b}\\
\nonumber&=\|f_j^{(1)}-f_j^{(2)}\|_{\delta',s,b}\\
\nonumber&\leq \|l_j^{(1)}-l_j^{(2)}\|_{\delta',s,b}+\|z_j^{(1)}-z_j^{(2)}\|_{\delta',s,b}\\
&\leq \frac{C}{(\delta-\delta')^b}(R+\|W_0\|_{1,s+2,b})^{2}
\|W_1-W_2\|_{\delta,s+2,b}.
\end{align}

Adding (\ref{I}), (\ref{II}) and (\ref{III}) reveals
\begin{align}\label{FF}
\nonumber&~~~~\|(1-\partial_x^2)^{-1}(F_j(W_1,W_{1,x})-F_j(W_2,W_{2,x}))\|_{\delta',s+2,b}\\
&\leq \frac{L}{(\delta-\delta')^b}\|W_1-W_2\|_{\delta,s+2,b},
\end{align}
where
\begin{align}\label{L}
L=C[(R+\|W_0\|_{1,s+2,b})^{k}+(R+\|W_0\|_{1,s+2,b})^{2}].
\end{align}
Analogous to the proof of (\ref{FF}), we can obtain
\begin{align}\label{GG}
\nonumber&~~~~\|(1-\partial_x^2)^{-1}(G_j(W_1,W_{1,x})-G_j(W_2,W_{2,x}))\|_{\delta',s+2,b}\\
&\leq \frac{L}{(\delta-\delta')^b}\|W_1-W_2\|_{\delta,s+2,b},
\end{align}
where $L$ is defined in (\ref{L}).
Therefore, (\ref{P1P2}) is desired.

Next, we derive (\ref{P0}). For $1\leq j \leq N$, we have
\begin{align*}
~~~~~\|(1-\partial_x^2)^{-1}[F_j(W_0,W_{0,x})]\|_{\delta',s+2,b}
\leq J_1+J_2+J_3,
\end{align*}
where
\begin{align*}
J_1&=\|(1-\partial_x^2)^{-1}(m_{j,0}H_0)_x\|_{\delta',s+2,b},\\
J_2&=\|(1-\partial_x^2)^{-1}(m_{j,0}H_0)\|_{\delta',s+2,b},\\
J_3&=\|(1-\partial_x^2)^{-1}f_{j,0}\|_{\delta',s+2,b}= \|f_{j,0}\|_{\delta',s,b}.
\end{align*}
Because $H_0$ is a degree $k$ polynomial of $W_0$ and $W_{0,x}$, we have
\begin{align}\label{H0}
\|H_0(W_0,W_{0,x})\|_{\delta,s,b}\leq C \|W_0\|^k_{1,s+2,b}.
\end{align}
Combining (\ref{2.4}), (\ref{2.6}), and (\ref{algebra}) with (\ref{H0}) generates
\begin{align}\label{J1}
\nonumber J_1&=\|(1-\partial_x^2)^{-1}(m_{j,0}H_0)_x\|_{\delta',s+2,b}\\
\nonumber &=\|(m_{j,0}H_0)_x\|_{\delta',s,b}\\
\nonumber&\leq \frac{e^{-b}b^b}{(\delta-\delta')^b}\|m_{j,0}H_0\|_{\delta,s,b}\\
\nonumber&\leq \frac{e^{-b}b^b}{(\delta-\delta')^b}C_s\|m_{j,0}\|_{\delta,s,b}\|H_0\|_{\delta,s,b}\\
\nonumber&\leq \frac{C}{(\delta-\delta')^b}\|u_{j,0}\|_{\delta,s+2,b}\|W_0\|^k_{\delta,s+2,b}\\
&\leq \frac{C}{(\delta-\delta')^b}\|W_0\|_{1,s+2,b}^{k+1}.
\end{align}
As per (\ref{2.6}) and (\ref{algebra}), we obtain
\begin{align}\label{J2}
\nonumber J_2&=\|(1-\partial_x^2)^{-1}(m_{j,0}H_0)\|_{\delta',s+2,b}\\
\nonumber&=\|m_{j,0}H_0\|_{\delta',s,b}\\
\nonumber&\leq C_s\|m_{j,0}\|_{\delta',s,b}\|H_0\|_{\delta',s,b}\\
\nonumber&= C_s\|u_{j,0}\|_{\delta',s+2,b}\|H_0\|_{\delta',s,b}\\
\nonumber&\leq C\|W_0\|_{\delta,s+2,b}\|W_0\|^k_{\delta,s+2,b}\\
&\leq \frac{C}{(\delta-\delta')^b}\|W_0\|_{1,s+2,b}^{k+1}.
\end{align}
Let us now estimate $J_3$. To do so, we need  the following inequalities:
\begin{align}\label{lj0}
\nonumber\|l_{j,0}\|_{\delta',s,b}&\leq C \sum_{i=1}^{N}\|m_{i,0}\|_{\delta',s,b}\|u_{j0}-u_{j0,x}\|_{\delta',s,b}\|v_{i0}+v_{i0,x}\|_{\delta',s,b}\\
\nonumber&\leq C\sum_{i=1}^{N}\|W_0\|_{\delta',s+2,b}(\|W_0\|_{\delta',s,b}+\|W_0\|_{\delta',s+1,b})^2\\
&\leq C\|W_0\|_{1,s+2,b}^{3},
\end{align}
and 
\begin{align}\label{zj0}
\|z_{j,0}\|_{\delta',s,b}\leq C\|W_0\|_{1,s+2,b}^{3}.
\end{align}
Eqs. (\ref{lj0}) and (\ref{zj0}) indicate 
\begin{align}\label{J3}
\nonumber J_3&=\|(1-\partial_x^2)^{-1}f_{j,0}\|_{\delta',s+2,b}= \|f_{j,0}\|_{\delta',s,b}\\
\nonumber&\leq \|l_{j,0}\|_{\delta',s,b}+\|z_{j,0}\|_{\delta',s,b}\leq C\|W_0\|^3_{1,s+2,b}\\
&\leq \frac{C}{(\delta-\delta')^b}\|W_0\|^3_{1,s+2,b},
\end{align}
while
Eqs. (\ref{J1}), (\ref{J2}) and (\ref{J3}) foster 
\begin{align*}
\|(1-\partial_x^2)^{-1}F_j(W_0,W_{0,x})\|_{\delta',s+2,b}\leq C\frac{\|W_0\|^{k+1}_{1,s+2,b}
+\|W_0\|^3_{1,s+2,b}}{(\delta-\delta')^b}.
\end{align*}
Replacing $\delta'$ by $\delta$ and $\delta$ by 1, and setting
\begin{align*}
M=C(\|W_0\|^{k+1}_{1,s+2,b}+\|W_0\|^3_{1,s+2,b}),
\end{align*}
we have
\begin{align*}
&~~~\|(1-\partial_x^2)^{-1}F_j(W_0,W_{0x})\|_{\delta,s+2,b}\leq \frac{M}
{(1-\delta)^b},~~1\leq j \leq N.
\end{align*}
Adopting a procedure similar to the estimate of $\|(1-\partial_x^2)^{-1}F_j(W_0,W_{0x})\|_{\delta,s+2,b}$ produces
\begin{align*}
&~~~\|(1-\partial_x^2)^{-1}G_j(W_0,W_{0x})\|_{\delta,s+2,b}\leq \frac{M}
{(1-\delta)^b},~~1\leq j \leq N.
\end{align*}
Thus, we attain the desired estimate (\ref{P0}).

Substituting the above $L$ and $M$ into (\ref{TT}), and letting $R=\|W_0\|_{1,s+2,b}$, we obtain
\begin{align*}
\frac{(2^{b}-1)R}{(2^{b}-1)2^{2b+3}LR + M} > \frac{1}{2^{2b+4}L},
\end{align*}
which gives
\begin{align}\label{lifespan}
T'&= \frac{1}{2^{2b+4}L}=\frac{1}{C(2^{2b+4+k}\|W_0\|_{1,s+2,b}^{k}+2^{2b+6}\|W_0\|_{1,s+2,b}^{2})}\\
&\approx\frac{1}{\|W_0\|_{\delta,s+2,b}^k+\|W_0\|_{\delta,s+2,b}^2}.
\end{align}
So, by Theorem 2.1,
there exists a unique solution $W(t)$ to the Cauchy problem (\ref{mtivp1}), which  is
a holomorphic vector function in $(D(0, T'(1 - \delta)^b))^{2N} \mapsto (G^{\delta,s,b})^{2N}$ for every $\delta \in(0, 1)$.
 This completes the proof of Lemma 3.1 and thus the proof of Theorem 3.1 as well.
\end{proof}
\begin{remark3}
From the proof of Theorem 3.1, we see that the lifespan is affected by the order of the nonlinear terms included in (\ref{mt}).
\end{remark3}


\section{Continuity of the data-to-solution map}
\newtheorem{remark}{Remark}[section]
\newtheorem{example}{Example}[section]
\newtheorem{theorem}{Theorem}[section]
\newtheorem{definition}{Definition}[section]
\newtheorem{lemma}{Lemma}[section]

\par
In this section, we prove the continuity of the data-to-solution
map for the initial data and solution described in Theorem 3.1.

First, let us recall that the scale of Banach spaces $X_{\delta}$ and the function
$F(u)$ satisfy the conditions (1) and (2).  For any $\delta > 0$ and $T>0$, let $H(|t| < T;X_{\delta})$ denote the set of holomorphic functions in $|t| < T$ valued in $X_{\delta}$. Also, we know 
that for $ 0<\delta\leq 1$ and $g\in H(|t| < T;X_{\delta})$ with $T > 0$, the following Cauchy problem 
\begin{align}
\frac{df}{dt}=g,~~f(0)=f_0
\end{align}
has a unique solution $f \in H(|t| < T;X_{\delta})$ given by
\begin{align}
f(t)=f_0+\int_0^t g(\tau)d\tau.
\end{align}
Therefore, it follows that the existence of $w$ in Theorem 2.1 is equivalent to the existence of
$w \in H(|t| < \frac{T'(1-\delta)^b}{2^b-1};X_{\delta})$, for every $ \delta\in (0, 1)$, if
$|t| < \frac{T'(1-\delta)^b}{2^b-1}$
\begin{align}
\|\int_0^t Fw(\tau)d\tau\|_{\delta}<R.
\end{align}
So,  our initial value problem (\ref{mtivp1}) is converted to finding the fixed point of the following equation
\begin{align}\label{fix}
\nonumber W(t)&=W_0+\int_0^t(1-\partial_x^2)^{-1} P(W,W_x)(\tau)d\tau.
\end{align}
To see that, we need to introduce a new space $E_a$.
\begin{definition}\cite{Luo}
For any $a>0$, define $E_a=\bigcap _{0<\delta<1}
H(D(0, a\frac{(1 - \delta)^b}{2^b-1});X_{\delta})$ as the Banach space of all functions $t\mapsto u(t)$ with the following property:
\begin{align}
u: \{t:|t|<a\frac{(1 - \delta)^b}{2^b-1}\}\rightarrow X_{\delta}~~~~is~~holomorphic,~~0<\delta<1,
\end{align}
whose norm is given by
\begin{align}\label{Ea}
|||u|||_{a} := \mathop {\sup }\limits_{\scriptstyle |t|<a\frac{(1 - \delta)^b}{2^b-1} \hfill \atop
  \scriptstyle 0 < \delta < 1 \hfill}\{\|u(t)\|_{\delta}(1-\delta)^b\sqrt{1-\frac{|t|}{a(1-\delta)^b}} \}<\infty.
\end{align}
\end{definition}
\begin{remark}
Apparently, we have $E_{T_2}\hookrightarrow E_{T_1}$ for any
$0<T_1<T_2$.
\end{remark}


As per the procedure illustrated in \cite{Luo}, the following lemmas hold in the space $E_a$ with the norm (\ref{Ea}).

\begin{lemma}\label{aaa}
Let $b\geq1$, $a>0$, $u\in E_a$, $0 < \delta < 1$, and $0<t<\frac{a(1-\delta)^b}{2^b-1}$. Then we have
\begin{align*}
\int_0^{t}\frac{\|u(\tau )\|_{\delta(\tau)}}{(\delta(\tau)-\delta)^b} d\tau\leq \frac{a2^{2b+3}|||u|||_a}{(1-\delta)^b}\sqrt{\frac{a(1-\delta)^b}{a(1-\delta)^b-|t|}},
\end{align*}
where $\delta(\tau)\in (\delta,1)$ is given by
\begin{align}\label{del} \delta(\tau)=\frac{1}{2}(1+\delta)+(\frac{1}{2})^{2+\frac{1}{b}}{[(1-\delta)^b-\frac{t}{a}]^{1/b}
-[(1-\delta)^b+(2^{b+1}-1)\frac{t}{a}]^{1/b}}.
\end{align}
\end{lemma}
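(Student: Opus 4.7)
The plan is to exploit the defining inequality for the $E_a$ norm, apply it with the specific interpolating radius $\delta(\tau)$, and then reduce the bound to an elementary one-variable integral in $\tau$. From (\ref{Ea}), for every $\sigma\in(0,1)$ and every $\tau$ with $|\tau|<a(1-\sigma)^b/(2^b-1)$ one has
\begin{equation*}
\|u(\tau)\|_{\sigma}\leq\frac{|||u|||_{a}}{(1-\sigma)^b\sqrt{1-\dfrac{|\tau|}{a(1-\sigma)^b}}}.
\end{equation*}
I would apply this with $\sigma=\delta(\tau)$, so the first step is to check that the formula (\ref{del}) actually produces a $\delta(\tau)$ which (i) lies in $(\delta,1)$ for $0\le\tau\le t$ and (ii) satisfies $|\tau|<a(1-\delta(\tau))^b/(2^b-1)$, so that the norm bound above is legitimate. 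Since at $\tau=0$ the correction term vanishes and $\delta(\tau)$ reduces to the midpoint $(1+\delta)/2$, and the two bracketed $b$-th roots are monotone in $\tau$, both conditions should follow from $0<\tau\le t<a(1-\delta)^b/(2^b-1)$ together with the constraint $b\geq 1$.

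The heart of the argument is the algebraic identity implicit in (\ref{del}): the radius $\delta(\tau)$ is engineered so that the three quantities
\begin{equation*}
(1-\delta(\tau))^b,\qquad (\delta(\tau)-\delta)^b,\qquad 1-\frac{|\tau|}{a(1-\delta(\tau))^b}
\end{equation*}
admit simultaneous lower bounds in terms of $(1-\delta)^b$ and $|\tau|/a$. Using the decomposition $1-\delta(\tau)=\tfrac{1-\delta}{2}-R(\tau)$ and $\delta(\tau)-\delta=\tfrac{1-\delta}{2}+R(\tau)$, where $R(\tau)$ is the explicit correction in (\ref{del}), and applying the elementary inequalities $(x+y)^{1/b}\leq x^{1/b}+y^{1/b}$ and $(x+y)^b\geq x^b+y^b$ for $b\geq 1$ and $x,y\geq 0$, I expect the three quantities to be bounded below respectively by constants times
$\bigl[(1-\delta)^b-|\tau|/a\bigr]$, $\bigl[(1-\delta)^b+(2^{b+1}-1)|\tau|/a\bigr]$, and a factor independent of $\tau$ of size $\sim(1-\delta)^b/[a(1-\delta)^b-|t|]^{1/2}$; the specific $(1/2)^{2+1/b}$ coefficient in (\ref{del}) is chosen precisely to make these bounds match.

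Once these bounds are in place, the integrand is majorized by a constant multiple of $|||u|||_a/(1-\delta)^{2b}$ times an elementary function of $\tau$, and the final step is to compute the $\tau$-integral explicitly. The resulting integral is essentially of the form $\int_0^{t}\! d\tau/\bigl[(1-\delta)^b-\tau/a\bigr]^{1/2}$, which equals $2a\bigl[\sqrt{(1-\delta)^b}-\sqrt{(1-\delta)^b-t/a}\bigr]\leq 2a\sqrt{(1-\delta)^b}$, and one rearranges to recover the $\sqrt{a(1-\delta)^b/(a(1-\delta)^b-|t|)}$ factor in the statement. Tracking the numerical constants (powers of $2$ coming from the midpoint splitting, from $2^{b+1}-1\le 2^{b+1}$, and from the $(1/2)^{2+1/b}$ coefficient) should yield the prefactor $a\,2^{2b+3}/(1-\delta)^b$.

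\textbf{Main obstacle.} The conceptual steps are straightforward, but the main difficulty is the bookkeeping in the algebraic balancing: verifying that the particular $\delta(\tau)$ in (\ref{del}) really does force the three quantities above to cooperate, and that the numerical constants collapse to exactly $2^{2b+3}$. This is essentially a computation with $b$-th power inequalities that must be handled carefully because the exponent $b\geq 1$ is arbitrary; getting the uniform-in-$b$ form of the estimate is the delicate part.
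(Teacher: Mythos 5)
The paper itself does not prove this lemma: it simply invokes ``the procedure illustrated in \cite{Luo}'', so your attempt has to be judged as a reconstruction of that argument. Your overall strategy is the right one (apply the $E_a$-norm bound at the intermediate radius $\delta(\tau)$, check admissibility, extract lower bounds on $(\delta(\tau)-\delta)^b$, $(1-\delta(\tau))^b$ and $1-\tau/(a(1-\delta(\tau))^b)$ from the design of $\delta(\tau)$, then integrate in $\tau$), but the proposal stops exactly where the content of the lemma lies, and the concrete guesses you do commit to are off. Writing $A=(1-\delta)^b$, $P=(A-\tau/a)^{1/b}$, $Q=(A+(2^{b+1}-1)\tau/a)^{1/b}$ (the $t$ in (\ref{del}) should be read as $\tau$, as you implicitly do), the whole lemma hinges on the single inequality $P+Q\le 2^{1+1/b}(1-\delta)$, which follows from concavity of $x\mapsto x^{1/b}$ (giving $P+Q\le 2\,(A+(2^b-1)\tau/a)^{1/b}$) combined with the hypothesis $\tau\le t< aA/(2^b-1)$. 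This one inequality simultaneously yields $\delta(\tau)-\delta\ge 2^{-1-1/b}P$, i.e. $(\delta(\tau)-\delta)^b\ge 2^{-(b+1)}(A-\tau/a)$, and $1-\delta(\tau)\ge 2^{-1-1/b}Q$, i.e. $(1-\delta(\tau))^b\ge 2^{-(b+1)}[A+(2^{b+1}-1)\tau/a]$, hence $(1-\delta(\tau))^b-\tau/a\ge 2^{-(b+1)}(A-\tau/a)$ and $1-\tau/(a(1-\delta(\tau))^b)\ge (A-\tau/a)/(2^{b+1}A)$. You never identify this mechanism; you only assert (``I expect'', ``should follow'') that the bounds match, and your stated bound for the third factor --- a $\tau$-independent quantity of size $(1-\delta)^b/[a(1-\delta)^b-|t|]^{1/2}$ --- cannot serve: the left-hand side is a number in $(0,1]$ while your expression is dimensionally inconsistent (it scales like $a^{-1/2}$) and is not what the design provides.

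Two further concrete problems. First, with the correct lower bounds the integrand is of order $(A-\tau/a)^{-3/2}$, and the factor $\sqrt{aA/(aA-|t|)}$ in the conclusion comes directly from $\int_0^t(A-\tau/a)^{-3/2}\,d\tau\le 2a(A-t/a)^{-1/2}$; your final integral $\int_0^t(A-\tau/a)^{-1/2}\,d\tau\le 2a\sqrt A$ is bounded uniformly in $t$ and cannot be ``rearranged'' into that factor --- the powers of $A$ and of $A-\tau/a$ must be distributed correctly \emph{before} integrating, which is precisely the bookkeeping you defer. Second, the admissibility you wave away, namely that $(\tau,\delta(\tau))$ lies in the region over which the supremum in (\ref{Ea}) is taken, i.e. $|\tau|<a(1-\delta(\tau))^b/(2^b-1)$, is not a routine consequence of $t<a(1-\delta)^b/(2^b-1)$ and $b\ge1$: the design only guarantees $(1-\delta(\tau))^b\ge 2^{-(b+1)}[A+(2^{b+1}-1)\tau/a]$, which for $b\ge2$ and $\tau$ near $aA/(2^b-1)$ gives the weaker statement $|\tau|<a(1-\delta(\tau))^b$ (enough to make the square root real) but not membership in the stated sup domain; this step therefore needs an explicit argument (or a more careful reading of the norm's domain), not a ``should follow''. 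As it stands, the proposal is a plausible plan whose decisive steps are either missing or incorrect, so it does not establish the lemma.
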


\begin{lemma}\label{Ku}
Let $b\geq1$, $a>0$, $0 < \delta < 1$, $0<t<\frac{a(1-\delta)^b}{2^b-1}$, and
$W,Y\in \overline{B(W_0,R)}\subset (E_a)^{2N}$. Then, under assumption (\ref{con2}) the following inequality holds:
\begin{align*}
\|\int_0^t(1-\partial_x^2)^{-1} [P(W,W_x)-P(Y,Y_x)](\tau)d\tau\|_{\delta}\leq L
\int_0^{t}\frac{\|W(\tau )-Y(\tau)\|_{\delta(\tau)}}{(\delta(\tau)-\delta)^b} d\tau,
\end{align*}
where $\delta(\tau)$ defined in Lemma \ref{aaa} is a continuous function on $[0, |t|]$ satisfying $\delta<\delta(\tau)\leq 1,$
and $L$ is the same constant as shown in condition (\ref{con2}).
\end{lemma}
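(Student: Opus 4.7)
The plan is to reduce the Banach-valued integral estimate to the pointwise Lipschitz bound (\ref{con1})---which in our concrete setting is the estimate (\ref{P1P2}) established in Lemma 3.1---using the auxiliary scale $\delta(\tau)$ already constructed in Lemma \ref{aaa}. First, I would move the $X_\delta = (G^{\delta,s+2,b})^{2N}$-norm inside the integral by the triangle inequality for Banach-space-valued Riemann integrals:
\begin{align*}
\Bigl\|\int_0^t (1-\partial_x^2)^{-1}[P(W,W_x)-P(Y,Y_x)](\tau)\,d\tau\Bigr\|_\delta \leq \int_0^{|t|}\bigl\|(1-\partial_x^2)^{-1}[P(W,W_x)(\tau)-P(Y,Y_x)(\tau)]\bigr\|_\delta\,d\tau.
\end{align*}
The integrand is continuous in $\tau$ thanks to the holomorphy of $W,Y$ in the scale $\{X_\delta\}$ together with condition (2) of Section 2 applied to the nonlinearity $(1-\partial_x^2)^{-1}P$, so the right-hand side is well-defined.

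Next, for each fixed $\tau \in [0,|t|]$ I would invoke Lemma \ref{aaa}, which supplies a continuous selector $\delta(\tau)\in(\delta,1)$ defined by (\ref{del}). Since $0<\delta<\delta(\tau)<1$, the pair $(\delta,\delta(\tau))$ is admissible for condition (\ref{con1}); applying that condition to $W(\tau),Y(\tau)$ with source scale $\delta(\tau)$ and target scale $\delta$ yields
\begin{align*}
\bigl\|(1-\partial_x^2)^{-1}[P(W,W_x)(\tau)-P(Y,Y_x)(\tau)]\bigr\|_\delta \leq \frac{L}{(\delta(\tau)-\delta)^b}\,\|W(\tau)-Y(\tau)\|_{\delta(\tau)}.
\end{align*}
Since $L$ depends only on $R$ and $\|W_0\|_{1,s+2,b}$ and is therefore independent of $\tau$, pulling it outside the integral delivers the announced inequality.

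The main technical subtlety---and the place where one has to actually use the hypothesis $W,Y \in \overline{B(W_0,R)}\subset(E_a)^{2N}$ rather than just the continuity provided by $E_a$---is to verify that the Lipschitz bound really applies at scale $\delta(\tau)$. Concretely, (\ref{con1}) is stated under the constraint $\|w_i-w_0\|_{\delta(\tau)}<R$, so one must translate the global $|||\cdot|||_a$-ball condition via the supremum bound (\ref{Ea}) into the pointwise scale-$\delta(\tau)$ bound. The algebraic shape of (\ref{del}) is engineered precisely so that the attenuating factor $(1-\delta(\tau))^b\sqrt{1-|\tau|/(a(1-\delta(\tau))^b)}$ interacts correctly with the norm of $W-W_0$ in $E_a$ to make this compatibility hold uniformly in $\tau\in[0,|t|]$. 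Verifying this is a short algebraic calculation mirroring the one in \cite{Luo}, but it is the only step in which the particular form of $\delta(\tau)$ is essential beyond its role in Lemma \ref{aaa}; once it is in place, the rest of the argument is purely mechanical.
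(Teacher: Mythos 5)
Your skeleton is the intended one (the paper itself gives no proof of this lemma, deferring to the procedure in \cite{Luo}): move the norm inside the integral and apply the Lipschitz estimate (\ref{con1}) (this is what the lemma's citation of (\ref{con2}) should read; in the concrete setting it is (\ref{P1P2})) at the scale pair $(\delta,\delta(\tau))$, where the specific choice (\ref{del}) guarantees $\delta<\delta(\tau)<1$ and $|\tau|<a(1-\delta(\tau))^b/(2^b-1)$, so that $W(\tau),Y(\tau)$ are defined in $X_{\delta(\tau)}$ and, afterwards, Lemma \ref{aaa} can absorb the singular factor $(\delta(\tau)-\delta)^{-b}$.

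However, the step you yourself single out as the crux is resolved incorrectly. You claim that the admissibility conditions $\|W(\tau)-W_0\|_{\delta(\tau)}<R$ and $\|Y(\tau)-W_0\|_{\delta(\tau)}<R$ needed for (\ref{con1}) follow from the $|||\cdot|||_a$-ball hypothesis by a ``short algebraic calculation'' exploiting the weight in (\ref{Ea}) and the shape of (\ref{del}). They do not: since $(1-\delta(\tau))^b\sqrt{1-|\tau|/(a(1-\delta(\tau))^b)}\leq 1$, the bound $|||W-W_0|||_a\leq R$ only gives
\begin{align*}
\|W(\tau)-W_0\|_{\delta(\tau)}\ \leq\ R\,\Big[(1-\delta(\tau))^b\sqrt{1-\tfrac{|\tau|}{a(1-\delta(\tau))^b}}\Big]^{-1}\ \geq\ R,
\end{align*}
i.e. the weight works against you and in fact degenerates as $\delta(\tau)\to 1$ or as $|\tau|$ approaches the boundary of the disc; no choice of $\delta(\tau)$ repairs this direction of the inequality. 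The hypothesis ``$W,Y\in\overline{B(W_0,R)}$'' must instead be understood (as in \cite{Luo,B-H-P1}) as the pointwise-in-scale condition that $\|W(\tau)-W_0\|_{\delta'}\leq R$ for the admissible pairs $(\tau,\delta')$ — which is precisely what the paper verifies directly, see (\ref{mn2}) and (\ref{mn22}), before invoking this lemma in the proof of Theorem \ref{cc}. With that reading the application of (\ref{con1}) at scale $\delta(\tau)$ is immediate, and the remainder of your argument (Minkowski inequality for the Banach-valued integral, continuity of the integrand in $\tau$, and $\tau$-independence of $L$) goes through as written.
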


For our needs, let us also recall the continuity definition of the data-to-solution map.
\begin{definition}\cite{B-H-P1,Luo}\label{MT}
 The data-to-solution map $W_0\mapsto W(t)$ is called "continuous" if for a given
$W_{\infty}(0) \in (G^{1,s+2,b})^{2N}$ there exist $T > 0$ and $R > 0$ such that for any sequence
of initial data $W_{l}(0) \in (G^{1,s+2,b})^{2N}$ converging to $W_{\infty}(0)$ in $(G^{1,s+2,b})^{2N}$, the corresponding solutions
 $W_{l}(t),W_{\infty}(t)$ to the Cauchy problem (\ref{mtivp1}) for all sufficiently large $l$ satisfy: $W_{l}(t),W_{\infty}(t)\in (E_{T})^{2N}$ and $|||W_{l}(t)-W_{\infty}(t)|||_T\rightarrow 0$, where
\begin{align*}
|||W|||_T:=\sup\{\|W(t)\|_{\delta}(1-\delta)^b\sqrt{1-\frac{|t|}{T(1-\delta)^b}}:0<
\delta<1,|t|<\frac{T(1-\delta)^b}{2^b-1}\}.
\end{align*}
\end{definition}

Let us now give the continuity theorem of the solution map for the Cauchy problem (\ref{mtivp1}).
\begin{theorem}\label{cc}
Let $b\geq1$, $W_0 \in (G^{1,s+2,b})^{2N}, s > \frac{1}{2}$, and $R > 0$. Then
there exists $T > 0$, which is
given in (\ref{TTTT}), such that the Cauchy problem (\ref{mtivp1}) has a unique solution
$
W\in (E_{T})^{2N} ,
$
and
the data-to-solution map $W_0\mapsto W(t): (G^{1,s+2,b})^{2N}\mapsto (E_{T})^{2N}$ is continuous.
\end{theorem}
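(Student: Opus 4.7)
The plan is to recast the Cauchy problem (\ref{mtivp1}) as a fixed-point equation
\[
\Phi(W)(t) := W_0 + \int_0^t (1-\partial_x^2)^{-1}P(W,W_x)(\tau)\,d\tau
\]
in a closed ball of the Banach space $(E_T)^{2N}$, and then to read off continuity of the data-to-solution map from the contraction constant. The abstract Lipschitz and growth bounds for $(1-\partial_x^2)^{-1}P$ have already been established in Lemma 3.1 (with $L,M$ depending only on $R$ and $\|W_0\|_{1,s+2,b}$), and the two technical ingredients that convert these into $E_T$-estimates are Lemmas \ref{aaa} and \ref{Ku}.

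First, chaining Lemma \ref{Ku} with Lemma \ref{aaa} applied to $u=W-Y$ and $a=T$ yields, after multiplying by the weight $(1-\delta)^b\sqrt{1-|t|/(T(1-\delta)^b)}$ appearing in the definition of $|||\cdot|||_T$ and taking the supremum in $\delta$ and $t$,
\[
|||\Phi(W)-\Phi(Y)|||_T \le L\,T\,2^{2b+3}\,|||W-Y|||_T
\]
for any $W,Y\in\overline{B(W_0,R)}\subset(E_T)^{2N}$. An analogous computation using (\ref{con2}) together with the fact that $P(W_0,W_{0,x})$ is independent of $t$ gives $|||\Phi(W)-W_0|||_T \le L\,T\,2^{2b+3}R + TM$. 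Choosing
\begin{equation}\label{TTTT}
T \;\approx\; \min\!\left\{\tfrac{R}{L\,2^{2b+3}R+M},\;\tfrac{1}{L\,2^{2b+4}}\right\}
\end{equation}
makes $\Phi$ a self-map on $\overline{B(W_0,R)}$ and a contraction with constant $\tfrac12$; the Banach fixed-point theorem then produces the unique solution $W\in(E_T)^{2N}$ asserted in the theorem, consistent with Theorem \ref{local}.

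For the continuity claim, fix $W_\infty(0)\in(G^{1,s+2,b})^{2N}$ and let $W_l(0)\to W_\infty(0)$ in that space. Since $\|W_l(0)\|_{1,s+2,b}$ is eventually bounded by $\|W_\infty(0)\|_{1,s+2,b}+1$, the constants $L,M$ supplied by Lemma 3.1 can be chosen uniformly in $l$ for all large $l$, so a \emph{common} $T$ and a common ball radius work for every $W_l$ and for $W_\infty$. Subtracting the two fixed-point equations gives
\[
W_l(t)-W_\infty(t) = [W_l(0)-W_\infty(0)] + \int_0^t (1-\partial_x^2)^{-1}\bigl[P(W_l,W_{l,x})-P(W_\infty,W_{\infty,x})\bigr]\,d\tau,
\]
and applying the contraction estimate above to the integral, while using $\|\cdot\|_\delta\le\|\cdot\|_1$ for the $t$-independent constant term, yields
\[
|||W_l-W_\infty|||_T \le \|W_l(0)-W_\infty(0)\|_{1,s+2,b} + \tfrac{1}{2}|||W_l-W_\infty|||_T.
\]
Absorbing the last term produces $|||W_l-W_\infty|||_T \le 2\|W_l(0)-W_\infty(0)\|_{1,s+2,b}\to 0$, which is the required continuity in the sense of Definition \ref{MT}.

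The main obstacle is technical rather than conceptual: one has to verify that the $\sqrt{1-|t|/(T(1-\delta)^b)}$ factor in $|||\cdot|||_T$ exactly cancels the $1/\sqrt{\cdot}$ factor produced by Lemma \ref{aaa}, so that the contraction constant $L\,T\,2^{2b+3}$ is genuinely independent of $\delta$ and $|t|$. A secondary bookkeeping issue is to ensure that $\Phi$ respects the self-mapping constraint \emph{uniformly} in $l$, which is what forces the uniform bound on $\|W_l(0)\|_{1,s+2,b}$ noted above and hence the uniform choice of $T$ in (\ref{TTTT}).
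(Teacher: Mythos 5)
Your proposal is correct, and its core --- the continuity argument --- is exactly the paper's: subtract the two integral equations, estimate the difference term by chaining Lemma \ref{Ku} with Lemma \ref{aaa} (with $a=T$), observe that the weight $(1-\delta)^b\sqrt{1-|t|/(T(1-\delta)^b)}$ cancels the singular factors so the constant $2^{2b+3}TL_l$ is uniform in $\delta$ and $t$, choose the common $T$ so that this constant is below $\tfrac12$, and absorb to get $|||W_l-W_\infty|||_T\le 2\|W_l(0)-W_\infty(0)\|$. The only genuine difference is in how existence in $(E_T)^{2N}$ is obtained: you rerun the Banach fixed-point argument for $\Phi(W)=W_0+\int_0^t(1-\partial_x^2)^{-1}P(W,W_x)\,d\tau$ directly in a ball of $(E_T)^{2N}$, whereas the paper simply invokes Theorem \ref{local} (hence the abstract Ovsyannikov Theorem 2.1, whose proof is that very contraction) for $W_\infty$ and each $W_l$ separately, and then only has to fix a common lifespan $T_{W_\infty(0),W_l(0)}$ via $R=2R_\infty+2$ in (\ref{TTTT}). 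Your route is self-contained but duplicates work already encapsulated in Theorems 2.1 and 3.1; the paper's route is shorter but must carry out the ``bookkeeping'' you defer, namely verifying (as in the paper's estimates around (\ref{mn2})--(\ref{mn22})) that $W_l(\tau)$ and $W_\infty(\tau)$ stay in the radius-$R$ ball about the relevant center for $\delta(\tau)\in(\delta,1]$, so that the uniform Lipschitz constant $L_l\le C_1R^k+C_2R^2$ from Lemma 3.1 is actually applicable; make sure you include that verification, since Lemma \ref{Ku} is only valid on $\overline{B(W_0,R)}$.
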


\begin{proof}
Let $b\geq1$,  $s > \frac{1}{2}$, and $W_{\infty}(0)\in (G^{1,s+2,b})^{2N}$. Assume $W_{l}(0)\in (G^{1,s+2,b})^{2N}$ is a sequence of initial data converging to $W_{\infty}(0)$, namely,  $\|W_{l}(0)-W_{\infty}(0)\|_{1,s,b}\rightarrow 0$, as $l\rightarrow \infty$. Then, there exists an integer $N_0\in\mathbb{N}$, such that for any $l\geq N_0$, we have
\begin{align}\label{un}
\|W_{l}(0)\|_{1,s+2,b}\leq\|W_{\infty}(0)\|_{1,s+2,b}+1.
\end{align}

Let
\begin{align}\label{RR}
R_{\infty}=\|W_{\infty}(0)\|_{1,s+2,b}+1,
 \end{align}
 and
 \begin{align}\label{RRR}
\nonumber R_{l}&=R_{\infty}+\|W_{l}(0)-W_{\infty}(0)\|_{1,s+2,b}\\
&\leq R_{\infty}+1,~~for ~~l\geq N_0.
 \end{align}
 For the given inial data $W_{\infty}(0),W_{l}(0)\in (G^{1,s+2,b})^{2N}$, Theorem 3.1 guarantees the existence and uniqueness of the corresponding solutions $W_{\infty}(t)\in E_{T_{W_{\infty}(0)},R_{\infty}}$ and $W_{l}(t)\in E_{T_{W_{l}(0)},R_{l}}$ respectively. For $|t|<\frac{T_{W_{\infty}(0)}(1-\delta)^b}{2^b-1}$,
\begin{align}\label{m1}
W_{\infty}(t)=W_{\infty}(0)+\int_0^t(1-\partial_x^2)^{-1} P(W_{\infty},W_{\infty,x})(\tau)d\tau,
\end{align}
for $|t|<\frac{T_{W_{l}(0)}(1-\delta)^b}{2^b-1}$,
\begin{align}\label{m2}
W_{l}(t)=W_{l}(0)+\int_0^t(1-\partial_x^2)^{-1} P(W_{l},W_{l,x})(\tau)d\tau.
\end{align}
and their lifespans are given by
  \begin{align*}
&~~~~T_{W_{\infty}(0)} =\frac{1}{2^{2b+4}(C_1\|W_{\infty}(0)\|_{1,s+2,b}^k+C_2\|W_{\infty}(0)\|_{1,s+2,b}^2)},
\end{align*} and
  \begin{align*}
&~~~~T_{W_{l}(0)} =\frac{1}{2^{2b+4}(C_1\|W_{l}(0)\|_{1,s+2,b}^k+C_2\|W_{l}(0)\|_{1,s+2,b}^2)},
\end{align*} respectively,
where $C_1=2^kC$, $C_2=4C$,  and $C$ is a constant coming from (\ref{lifespan}).
The common lifespan $T$ of $W_{\infty}(t)$ and $W_{l}(t)$ can be determined through the following formula
  \begin{align}\label{TTTT}
&~~~~T:=T_{W_{\infty}(0),W_{l}(0)} =\frac{1}{2^{2b+4}(C_1R^k+C_2R^2)},
\end{align}
where
\begin{align}\label{RRRR}
R=2R_{\infty}+2.
\end{align}
Eqs. (\ref{un}), (\ref{RR}), and (\ref{RRRR}) imply that $T_{W_{\infty}(0),W_{l}(0)}< T_{W_{\infty}(0)}$ and $T_{W_{\infty}(0),W_{l}(0)}< T_{W_{l}(0)}$, that is, $T_{W_{\infty}(0),W_{l}(0)}< \min\{ T_{W_{\infty}(0)}, T_{W_{l}(0)}\}. $

As $l\geq N_0$, noticing $E_{T_{W_{\infty}(0), R_{\infty}}}\hookrightarrow E_{T_{W_{\infty}(0),W_{l}(0)},R}$ and $E_{T_{W_{l}(0), R_l}}\hookrightarrow E_{T_{W_{\infty}(0),W_{l}(0)},R}$, we have
$W_{\infty}(t),W_{l}(t)\in E_{T_{W_{\infty}(0),W_{l}(0)},R}$. Moreover, for  $0<\delta<1$, $b\geq1$, $|t|<\frac{T_{W_{\infty}(0),W_{l}(0)}(1-\delta)^b}{2^b-1}$ and $l\geq N_0$, one has
\begin{align*}
\nonumber&~~~~\|W_{l}(t ) - W_{\infty}(0)\|_{\delta,s+2,b}\\
\nonumber &\leq \|W_{l}(t ) - W_{l}(0)\|_{\delta,s+2,b}+\|W_{l}(0) - W_{\infty}(0)\|_{1,s+2,b}\\
&\leq R_{l}+1\leq R_{\infty}+2<R.
\end{align*}

Next, we aim to prove $|||W_l - W_{\infty}|||_{T_{W_{\infty}(0),W_{l}(0)}}\rightarrow 0$ as $l\rightarrow \infty$.
When $0<\delta<1$ and $|t|<\frac{T_{W_{\infty}(0),W_{l}(0)}(1-\delta)^b}{2^b-1}$, it follows from Eqs. (\ref{m1}) and (\ref{m2}) that
\begin{align}
\nonumber&~~~~||W_l - W_{\infty}||_{\delta,s+2,b}\\
\nonumber&\leq \int_0^{t}
\|(1-\partial_x^2)^{-1}[P(W_l,W_{l,x})-P(W_{\infty},W_{\infty,x})]\|_{\delta,s+2,b}d\tau\\
&~~~~~~+\|W_l(0) - W_{\infty}(0)\|_{\delta,s+2,b} .
\end{align}

For $0<\delta<1$, $|t|<\frac{T_{W_{\infty}(0),W_{l}(0)}(1-\delta)^b}{2^b-1}$, $0\leq |\tau|=\tau \leq |t|$, and $\delta<\delta(\tau)\leq 1$,
we need to prove that
\begin{align*}
\|W_{\infty}(\tau ) - W_{l}(0)\|_{\delta(\tau),s+2,b}<R,
\end{align*}
and
\begin{align*}
\|W_{l}(\tau ) - W_{l}(0)\|_{\delta(\tau),s+2,b}<R.
\end{align*}
As $l\geq N_0$, we have
\begin{align}\label{mn2}
\nonumber&~~~~\|W_{\infty}(\tau ) - W_{l}(0)\|_{\delta(\tau),s+2,b}\\
\nonumber &\leq \|W_{\infty}(\tau ) - W_{\infty}(0)\|_{\delta(\tau),s+2,b}+\|W_{\infty}(0) - W_{l}(0)\|_{\delta(\tau),s+2,b}\\
&<R_{\infty}+1<R.
\end{align}
Theorem 2.1 and Eq. (4.9) guarantee the following result
\begin{align}\label{mn22}
\|W_{l}(\tau ) - W_{l}(0)\|_{\delta(\tau),s+2,b}\leq R_{l}<R.
\end{align}
Eqs. (\ref{mn2}), (\ref{mn22}), (\ref{con1}), and Lemma 4.2 lead to
\begin{align}\label{mm1}
\nonumber&\|W_l(t) - W_{\infty}(t)\|_{\delta,s+2,b}-\|W_l(0) - W_{\infty}(0)\|_{\delta,s+2,b}\\
\nonumber&\leq\int_0^{t}
\|(1-\partial_x^2)^{-1}[P(W_l,W_{l,x})-P(W_{\infty},W_{\infty,x})]\|_{\delta,s+2,b}d\tau\\
&\leq
L_l
\int_0^{t}\frac{\|W_{\infty}(\tau )-W_{l}(\tau ))\|_{\delta(\tau),s,b}}{(\delta(\tau)-\delta)^b} d\tau,
\end{align}
where $L_l=C_1(R_l+\|W_0\|_{1,s,b})^k
+C_2(R_l+\|W_0\|_{1,s,b})^2$ and $\delta(\tau)$ is defined in (\ref{del}), and  $a={T_{W_{\infty}(0),W_{l}(0)}})$.

In the light of  Lemma \ref{aaa} with $a=T_{W_{\infty}(0),W_{l}(0)}$ and (\ref{mm1}), under the conditions $0<\delta<1$ and $|t|<T_{W_{\infty}(0),W_{l}(0)}\frac{(1-\delta)^b}{2^b-1}$, we have
\begin{align*}
&\|W_l(t) - W_{\infty}(t)\|_{\delta,s,b}-\|W_l(0) - W_{\infty}(0)\|_{\delta,s,b}\\
&\leq
\frac{2^{2b+3}T_{W_{\infty}(0),W_{l}(0)}L_l|||W_{\infty}-W_l|||_{T_{W_{\infty}(0),W_{l}(0)}}}
{(1-\delta)^b}\sqrt{\frac{T_{W_{\infty}(0),W_{l}(0)}(1-\delta)^b}{T_{W_{\infty}(0),W_{l}(0)}(1-\delta)^b-|t|}},
\end{align*}
which implies 
\begin{align*}
\nonumber &~~~|||W_l(t) - W_{\infty}(t)|||_{T_{W_{\infty}(0),W_{l}(0)}}\\
&\leq2^{2b+3}T_{W_{\infty}(0),W_{l}(0)}L_l|||W_{\infty}-W_l|||_{T_{W_{\infty}(0),W_{l}(0)}}+\|W_l(0) - W_{\infty}(0)\|_{\delta,s,b}.
\end{align*}
Therefore, we have 
\begin{align}
\nonumber&~~~(1-2^{2b+3}T_{W_{\infty}(0),W_{l}(0)}L_l)|||W_l(t) - W_{\infty}(t)|||_{T_{W_{\infty}(0),W_{l}(0)}}\\
&\leq\|W_l(0) - W_{\infty}(0)\|_{\delta,s,b}.
\end{align}
So, as $l\geq N_0$, Eqs. (\ref{RR}), (\ref{RRR}) and (\ref{RRRR}) generate 
\begin{align}\label{ln}
\nonumber L_l&=C_1(R_l+\|W_l(0)\|_{1,s,b})^k
+C_2(R_l+\|W_l(0)\|_{1,s,b})^2\\
\nonumber&\leq C_1(R_l+\|W_{l}(0)-W_{\infty}(0)\|_{1,s,b}+\|W_{\infty}(0)\|_{1,s,b})^k\\
\nonumber&~~~~+C_2(R_l+\|W_{l}(0)-W_{\infty}(0)\|_{1,s,b}+\|W_{\infty}(0)\|_{1,s,b})^2\\
\nonumber&\leq C_1(R_l+1+\|W_{\infty}(0)\|_{1,s,b})^k
+C_2(R_l+1+\|W_{\infty}(0)\|_{1,s,b})^2\\
\nonumber&= C_1(R_l+R_{\infty})^k
+C_2(R_l+R_{\infty})^2\\
&\leq C_1R^k
+C_2R^2.
\end{align}
Due to $T_{W_{\infty}(0),W_{l}(0)}=\frac{1}{2^{2b+4}(C_1R^k+C_2R^2)}$ and Eq. (\ref{ln}), we have
\begin{align*}
2^{2b+3}T_{W_{\infty}(0),W_{l}(0)}L_l
<\frac{1}{2},
\end{align*} which implies 
 \begin{align}
 \nonumber &~~~|||W_l(t) - W_{\infty}(t)|||_{T_{W_{\infty}(0),W_{l}(0)}}\\
&\leq2\|W_l(0) - W_{\infty}(0)\|_{1,s,b}.
  \end{align}
Therefore we complete the proof of Theorem 4.1.
\end{proof}

\begin{remark}
Theorem \ref{cc} includes important results since it makes the 2n-component
Camassa-Holm system (\ref{mtivp1}) well-posed
in the spaces $G^{\delta,s,b}$
in the sense of Hadamard. One may compare these results with the
classical Cauchy-Kovalevski theorem, where there is no show-up of continuity for the data-to-solution map.
\end{remark}

Applying those results in Theorem \ref{local} and Theorem \ref{cc}, it is not difficult for one to obtain the well-posedness properties for
the following two integrable systems with arbitrary polynomial functions.

{\bf 1. A four-component Camassa-Holm system}

We may propose the Cauchy problem for the following four-component Camassa-Holm type system with initial data $W_0=(u_{1,0},u_{2,0},v_{1,0},v_{2,0})^T\in (G^{1,s,b})^4$:
\begin{equation}\left\{\begin{array}{ll}\label{4CHivp}
m_{1,t}+(Hm_1)_x+n_2(g_1g_2-H)+m_1(f_2g_2+2f_1g_1)=0,\\
m_{2,t}+(Hm_2)_x-n_1(g_1g_2-H)-m_2(f_1g_1+2f_2g_2)=0,\\
n_{1,t}+(Hn_1)_x-m_2(f_1f_2-H)-n_1(f_2g_2+2f_1g_1)=0,\\
n_{2,t}+(Hn_2)_x+m_1(f_1f_2-H)+n_2(f_1g_1+2f_2g_2)=0,\\
m_{j}(0,x)=m_{j,0}(x),\\
n_{j}(0,x)=n_{j,0}(x),~~ j=1,2,\end{array}\right.
\end{equation}
where $m_{j}=u_{j}-u_{j,xx},
n_{j}=v_{j}-v_{j,xx},$  $f_1=u_2-v_{1,x}, \ f_2=u_1+v_{2,x}, \ g_1=v_2+u_{1,x}$, and $g_2=v_1-u_{2,x}$.
The system (\ref{4CHivp}) is a special case of the multi-component CH model proposed by Xia and Qiao \cite{XQ1} in 2013, and was also studied
in 2014 by Li, Liu and Popowicz \cite{LLP}.
This system has Lax pair, 
bi-Hamiltonian structure, and infinitely many conservation laws \cite{LLP,XQ1}. Recently, Zhang and Yin \cite{ZY1} studied the local well-posedness for the system in Besove spaces, also they presented several global existence and blow-up results for two integrable two-component subsystems by selecting specific polynomials.

The free choice of the arbitrary function $H$ in (\ref{4CHivp}) allows one to recover
some well-known CH type equations through reductions with the detailed procedure in \cite{XQ1}.
The four-component system (\ref{4CHivp}) is completely integrable for any function $H$.
To study the well-posedness problem in the Gevrey class spaces, as discussed in Section 3, let us take the arbitrary  
$H$ as a degree $k$ polynomial function
 of $u_1,u_2,v_1,v_2$ and their derivatives. Similar to Eq. (\ref{mt}),
 the four-component system (\ref{4CHivp}) has the degree $(k+1)$ terms and cubic terms.
 Of course, 
 (\ref{4CHivp}) is naturally  a special case of (\ref{mt}) with $N=2$ \cite{XQ1}.
 With a slight modification in the proofs of Theorem \ref{local} and Theorem \ref{cc},
 it is not difficult for us to obtain the following local well-posedness
  for the Cauchy problem of the four-component system (\ref{4CHivp}).

\begin{theorem}
Let $s > \frac{1}{2}$ and $b\geq1 .$ If $W_0 \in (G^{1,{s+2},b})^{4}$,
then there exists a positive
time $T$, which depends on the initial data $W_0$, such that for every $\delta \in (0, 1),$ the Cauchy
problem (\ref{4CHivp}) has a unique solution $W(t)=(u_1(t),u_2(t),v_1(t),v_2(t))^T$ which is a holomorphic vector function in the disc $D(0, \frac{T(1-\delta)^b}{2^b-1})$
valued
in $(G^{\delta,{s+2},b})^{4}$.
 Furthermore,
the lifespan
 $$T\approx \frac{1}{\|W_0\|_{\delta,s+2,b}^k+\|W_0\|_{\delta,s+2,b}^2},$$ and the data-to-solution map $W_0\mapsto W(t): (G^{1,s,b})^{4}\mapsto (E_{T})^{4}$ is continuous.
\end{theorem}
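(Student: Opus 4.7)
The plan is to reduce the proof to a direct application of Theorem \ref{local} and Theorem \ref{cc}, since (\ref{4CHivp}) fits the same abstract framework. First I rewrite (\ref{4CHivp}) in evolutionary form: applying $(1-\partial_x^2)^{-1}$ to each equation and setting $W=(u_1,u_2,v_1,v_2)^T$ casts the system as
\[W_t=(1-\partial_x^2)^{-1}P(W,W_x),\qquad W(0,x)=W_0,\]
where $P=(P_1,P_2,P_3,P_4)^T$ assembles the four nonlinear right-hand sides. Since $f_j,g_j$ are linear combinations of components of $W$ and $W_x$, and $H$ is a polynomial of degree $k$ in the same quantities, each $P_j$ decomposes into three types of terms: a transport-type term $(Hm_j)_x$ or $(Hn_j)_x$ of degree $k+1$ with one outer derivative; undifferentiated degree-$(k+1)$ terms such as $m_jH$, $n_jH$, $m_j(f_1f_2-H)$, $n_j(g_1g_2-H)$; and purely cubic terms of the form $m_jf_ig_j$, $n_jf_ig_j$.

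With this decomposition in hand, I verify conditions (\ref{con1}) and (\ref{con2}) of the generalized Ovsyannikov theorem by mimicking the proof of Lemma 3.1. For each of the three term types the same toolkit applies: the algebra property (\ref{algebra}) to control products of Gevrey-Sobolev functions, the identity (\ref{2.6}) to absorb the two derivatives hidden in $m_j=(1-\partial_x^2)u_j$ and $n_j=(1-\partial_x^2)v_j$, and the derivative estimate (\ref{2.4}) to handle the outer derivative in the transport-type term. This produces a Lipschitz constant $L=C[(R+\|W_0\|_{1,s+2,b})^k+(R+\|W_0\|_{1,s+2,b})^2]$ and an initial-data bound $M=C(\|W_0\|^{k+1}_{1,s+2,b}+\|W_0\|^3_{1,s+2,b})$. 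Substituting these into (\ref{TT}) with $R=\|W_0\|_{1,s+2,b}$ and invoking Theorem \ref{abstract} yields existence, uniqueness, holomorphy, and the lifespan estimate
\[T\approx \frac{1}{\|W_0\|^k_{\delta,s+2,b}+\|W_0\|^2_{\delta,s+2,b}}.\]

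For the continuity statement, I replicate the argument of Theorem \ref{cc} line by line: given a sequence $W_l(0)\to W_{\infty}(0)$ in $(G^{1,s+2,b})^4$, I fix the enlarged radius $R=2\|W_{\infty}(0)\|_{1,s+2,b}+2$, define the common lifespan $T$ through (\ref{TTTT}), and combine the Lipschitz bound just established with Lemma \ref{aaa} to obtain
\[|||W_l-W_{\infty}|||_T\leq 2\|W_l(0)-W_{\infty}(0)\|_{1,s+2,b},\]
after absorbing the term $2^{2b+3}TL_l|||W_l-W_{\infty}|||_T$ on the left using $2^{2b+3}TL_l<\tfrac{1}{2}$. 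The main, and only nontrivial, obstacle is confirming that the cubic blocks $n_j(g_1g_2-H)$ and $m_j(f_1f_2-H)$ in (\ref{4CHivp}), whose algebraic form differs slightly from the symmetric nonlinearity $m_j(u_i-u_{ix})(v_i+v_{ix})$ in (\ref{mt}), still admit the same product estimates. This is immediate once one observes that $f_j,g_j\in G^{\delta,s+1,b}$ whenever $W\in (G^{\delta,s+2,b})^4$, so each cubic term is controlled exactly as the $l_j,z_j$ pieces in the proof of Lemma 3.1.
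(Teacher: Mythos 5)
Your proposal is correct and follows essentially the same route as the paper, which gives no separate detailed proof but merely observes that (\ref{4CHivp}) is the $N=2$ instance of (\ref{mt}) (with $H$ a degree-$k$ polynomial) and that Theorem 3.1 and Theorem 4.1 carry over with slight modification; your verification of conditions (\ref{con1})--(\ref{con2}) term by term, the resulting $L$, $M$, lifespan formula, and the repeated continuity argument are exactly that modification spelled out. Your closing observation that $f_j,g_j\in G^{\delta,s+1,b}$ for $W\in (G^{\delta,s+2,b})^4$, so the cubic blocks are estimated like the $l_j,z_j$ pieces of Lemma 3.1, is the only point needing comment and you handle it correctly.
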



{\bf 2. A two-component Camassa-Holm system}

Another amazing Camassa-Holm type system with an arbitrary function is the following
 synthetical integrable two-component peakon model 
 proposed by Xia, Qiao and Zhou \cite{XQZ}.
\begin{equation}\left\{\begin{array}{ll}\label{2CHivp}
m_{t}=F+F_x-\frac{1}{2}(uv-u_xv_x+uv_x-u_xv),\\
n_{t}=-G+G_x+\frac{1}{2}(uv-u_xv_x+uv_x-u_xv),\\
\end{array}\right.
\end{equation}
where $m=u-u_{xx}$ and
$n=v-v_{xx}.$
As studied in \cite{XQZ}, the above two-component model was proved integrable through its Lax pair and infinitely
many conservation laws. Moreover, the authors investigated the bi-Hamiltonian structure and the
interaction of multi-peakons.
Apparently, if $F = mH, G = nH,$
where $H$ is an arbitrary polynomial in $u, v$ and their derivatives, (\ref{2CHivp}) is reduced to the
special case of (\ref{mtivp}) with $N=1$. Thus the local well-posedness for the Cauchy problem of
(\ref{2CHivp}) naturally holds in Theorem \ref{local} and Theorem \ref{cc}.
The two-component model (\ref{2CHivp}) has very interesting solutions including muti-peakon solutions, muti-weak-kink solutions, weak kink-peakon solutions,
and periodic-peakon solutions. More interestingly,  (\ref{2CHivp})
is the first integrable system with peakon solutions, within our knowledge, which are not in the traveling wave type.
Therefore, we will give a deep exploration in analysis for this remarkable system (\ref{2CHivp}) in next section.

\section{Unique continuation }
\newtheorem{theorem1}{Theorem}[section]
\newtheorem{lemma1}{Lemma}[section]
\newtheorem {remark1}{Remark}[section]
\newtheorem{corollary1}{Corollary}[section]
\newtheorem{definition1}{Definition}[section]
\par
In this section, we focus on the following Cauchy problem of thesystem (\ref{mt}) with $N=  1$, that is system (\ref{2CHivp}) with $F = mH$ and $G = nH,$
 \begin{equation}\label{special1}
\left\{\begin{array}{ll}m_{t}=(mH)_x+mH+\frac{1}{2}[m(u-u_x)(v+v_x)] ,&t > 0,\,x\in\mathbb{R}, \\
n_{t}=(nH)_x-nH-\frac{1}{2}[n(u-u_x)(v+v_x)], &t>0, x\in\mathbb{R},\\
m(0,x)=m_0,~n(0,x)=n_0, &x\in\mathbb{R} ,
\end{array}\right.
\end{equation}
where $m=u-u_{xx}$ and $n=v-v_{xx}$.
 The property of unique continuation for the Cauchy problem (\ref{special1}) may be reflected with compactly supported initial data.
In the case of compactly supported initial data, unique continuation is
essentially an infinite speed of propagation of its support. Therefore, one may naturally ask the question: how will a strong solution behave at infinity under the given compactly
supported initial data? To provide a sufficient answer, let us prepare two lemmas listed below.

Given the initial data $z_0=(u_0,v_0)^T \in H^{s}\times
H^{s}, s\geqslant 3, $ Theorem 3.1 in \cite{ZY} ensures the local well-posedness
of strong solutions $z=(u,v)^T$ for Eq. (\ref{special1}). Consider the following initial value problem
\begin{equation}\label{qH}
\left\{\begin{array}{ll}q_{t}(t,x) = -H(z,z_x)(t,q(t,x)) ,&t \in [0,T),\,x\in \mathbb{R}, \\
q(0,x) = x, &x\in \mathbb{R},\end{array}\right.
\end{equation}
where $u,v$ denote the two components of solution $z$ to Eq.(\ref{special1}).
Due to $z(t,.)\in H^3\times H^3 \subset C^m$ with $0\leq m \leq \frac{5}{2},$
$z=(u,v)^T \in C^{1}([0,T)\times \mathbb{R},\mathbb{R}).$ Applying the
classical results in the theory of ordinary differential equations,
one can obtain the following result 
which is the key in the
proof for unique continuation of strong solutions to Eq.(3.1).

Let us first present two lemmas as follows.
\begin{lemma1}\label{lemma1}
 Let $z_{0}\in H^{s}\times H^{s}$, $s \geq 3$. Then Eq.(\ref{qH}) has a
unique solution $q \in C^{1}([0,T)\times\mathbb{R},\mathbb{R}).$
Moreover, for $(t,x)\in[0,T)\times\mathbb{R}$, the map $q(t,\cdot)$ is an increasing diffeomorphism over
$\mathbb{R}$ with
\begin{align}\label{qx}
q_{x}(t,x)=\exp\left(\int_{0}^{t}-H_x(s,q(s,x))ds\right)>0.
\end{align}
\begin{proof}
Differentiating Eq. (\ref{qH}) with respect to $x$ yields
\begin{equation}\label{qx1}
\left\{\begin{array}{ll}
\frac{d}{dt}q_x=-H_xq_{x},&t>0,~~x\in \mathbb{R},\\
q_x(0,x) = 1, &x\in \mathbb{R}.\end{array}\right.
\end{equation}
Solving the above equation, we obtain
$$
q_x(t,x)=\exp\{\int_0^t-H_x(s,q(s,x))ds\}>0,
$$
which exactly reads as Eq. (\ref{qx}).
\end{proof}

\end{lemma1}
\par
\begin{lemma1}\label{lemma2}
Let $z_{0}\in H^{s}\times H^{s}$, $s \geq 3$, and $T>0$ be the
maximal existence time of corresponding solution $z$ to Eq.(\ref{special1}).
Then for all $(t,x)\in
[0,T)\times\mathbb{R}$ we have
\begin{align}
\label{m} m(t,q(t,x))q_{x}(t,x) = m_{0}(x)\exp\{\int_0^t[H+\frac{1}{2}(u-u_x)(v+v_x)](s,q(s,x))ds\}, \\
\label{n} n(t,q(t,x))q_{x}(t,x) = n_{0}(x)\exp\{\int_0^t-[H+\frac{1}{2}(u-u_x)(v+v_x)](s,q(s,x))ds\}.
\end{align}
\begin{proof}
By Eqs. (\ref{special1}), (\ref{qH}) and  (\ref{qx1}), a direct calculation reveals 
\begin{align*}
&~~\frac{d}{dt}[m(t,q(t,x))q_x]
=(m_t + m_x q_t)q_x+m q_{tx}\\
&=[m_t - m_x H-mH_x)]q_x\\
&= (m_t -(mH)_x)q_x\\
&=[mH+\frac{1}{2}m(u-u_x)(v+v_x)]q_x\\
&=[H+\frac{1}{2}(u-u_x)(v+v_x)]mq_x.
\end{align*}
Solving for $mq_x$ from the above equation, we obtain
$$
m(t,q(t,x))q_x(t,x)=m_0(x)\exp\{\int_0^t[H+\frac{1}{2}(u-u_x)(v+v_x)](s,q(s,x))ds\},
$$
which is exactly Eq. (\ref{m}).

Adopting a similar procedure to the second equation in (\ref{special1})shown as above, we have
\begin{align*}
&~~\frac{d}{dt}[n(t,q(t,x))q_x]
=(n_t + n_x q_t)q_x+n q_{tx}\\
&=[n_t - n_x H-nH_x)]q_x\\
&= (n_t -(nH)_x)q_x\\
&=-[nH+\frac{1}{2}n(u-u_x)(v+v_x)]q_x\\
&=-[H+\frac{1}{2}(u-u_x)(v+v_x)]nq_x,
\end{align*}
which leads to  Eq. (\ref{n}).
\end{proof}
\end{lemma1}

Let us  now determine the behavior of solutions at infinity 
through the following theorem.

\begin{theorem1}\label{good}
Let $z=(u,v)^T\in C([0,T),H^s)\times C([0,T),H^s),s> \frac{5}{2}$ be a nontrivial solution of (\ref{special1}) with the maximal existence time
$T > 0,$ which has a compactly supported initial data on the interval $[a, b].$ Then we have
\begin{equation}
u(t,x)=\left\{\begin{array}{ll}\frac{1}{2}E_+(t)e^{-x} ,&x>q(t,b), \\
\frac{1}{2}E_-(t)e^{x} ,&x<q(t,a),\\
\end{array}\right.
\end{equation}
\begin{equation}
v(t,x)=\left\{\begin{array}{ll}\frac{1}{2}F_+(t)e^{-x} ,&x>q(t,b), \\
\frac{1}{2}F_-(t)e^{x} ,&x<q(t,a),\\
\end{array}\right.
\end{equation}
where $E_+(t):=\int_{q(t,a)}^{q(t,b)}e^y m(t,y)dy$, $E_-(t):=\int_{q(t,a)}^{q(t,b)}e^{-y} m(t,y)dy$, $F_+(t):=\int_{q(t,a)}^{q(t,b)}e^y n(t,y)dy$ and $F_-(t):=\int_{q(t,a)}^{q(t,b)}e^{-y} n(t,y)dy.$
Moreover, $E_+(t)$, $E_-(t)$, $F_+(t)$ and $F_-(t)$
are continuously non-vanishing with $E_+(0) = E_-(0)=F_+(0)=F_-(0)= 0$, and 
the monotonicity of $E_+$ and $F_-$ is displayed in the following four cases:

(1) If $m_0 \geq 0$ and $n_0\geq 0$, then $E_+$ is strictly
increasing, while $F_-$ is strictly decreasing for $t\in [0, T).$

(2) If $m_0\leq 0$ and $n_0\leq 0$, then $E_+$ is strictly
decreasing,
 while $F_-$ is strictly increasing for $t\in [0, T).$

(3) If $m_0 \geq 0$ and $n_0\leq 0$, then both $E_+$ and $F_-$ are strictly
decreasing for $t\in [0, T).$

(4) If $m_0 \leq 0$ and $n_0\geq 0$, then both $E_+$ and $F_-$ are strictly
increasing for $t\in [0, T).$
\end{theorem1}
\begin{remark1}
Theorem \ref{good} covers our previous work \cite{hu}, where
 the unique continuation of the system with $N=1$ and $H=-\frac{1}{2}(uv-u_xv_x)$ in (\ref{special1}) was proved.
\end{remark1}

For each fixed $t>0$, Theorem \ref{good} also tells us that as long as the solution $z=(u,v)^T$ exists, then the sign of the first component $u(t,x)$ at positive infinity is determined by the sign of $m_0$, while the sign of the second component $v(t, x)$ at negative infinity is determined by the sign of $n_0$.
 Let us now give the proof of Theorem 5.1.

\begin{proof}
If $u_0$ and $v_0$ are initially supported on the compact interval $[a,b]$,  so are $m_0$ and $n_0$.
From Eqs. (\ref{m}) and (\ref{n}), it follows
that $m( t,\cdot),n(t,\cdot)$ are compact with their support belonging to the interval $[q(t,a), q(t,b)].$
One may readily use the relation $u = \frac{1}{2} e^{-|x|} \ast m$ and $v = \frac{1}{2} e^{-|x|} \ast n$ to write
\begin{align}
u(t,x)=\frac{e^{-x}}{2}\int_{-\infty}^x e^y m(t,y)dy +\frac{e^x}{2}\int_{x}^{\infty} e^{-y} m(t,y)dy,\\
u_x(t,x)=-\frac{e^{-x}}{2}\int_{-\infty}^x e^y m(t,y)dy +\frac{e^x}{2}\int_{x}^{\infty} e^{-y} m(t,y)dy,
\end{align}
and
\begin{align}
v(t,x)=\frac{e^{-x}}{2}\int_{-\infty}^x e^y n(t,y)dy +\frac{e^x}{2}\int_{x}^{\infty} e^{-y} n(t,y)dy,\\
v_x(t,x)=-\frac{e^{-x}}{2}\int_{-\infty}^x e^y n(t,y)dy +\frac{e^x}{2}\int_{x}^{\infty} e^{-y} n(t,y)dy.
\end{align}
Assume that $m_0$ and $n_0$ are non-negative, then we have
\begin{align*}
u(t,x)+u_x(t,x)=\frac{e^x}{2}\int^{\infty}_x e^y m(t,y)dy\geq 0, \\
u(t,x)-u_x(t,x)=\frac{e^{-x}}{2}\int_{-\infty}^x e^y m(t,y)dy\geq 0,\\
v(t,x)+v_x(t,x)=\frac{e^x}{2}\int^{\infty}_x e^y n(t,y)dy\geq 0, \\
v(t,x)-v_x(t,x)=\frac{e^{-x}}{2}\int_{-\infty}^x e^y n(t,y)dy\geq 0.
\end{align*}
Let us define the following four functions
\begin{align*}
E_+(t)=\int_{q(t,a)}^{q(t,b)}e^y m(t,y)dy,~~E_-(t)=\int_{q(t,a)}^{q(t,b)}e^{-y} m(t,y)dy,\\
F_+(t)=\int_{q(t,a)}^{q(t,b)}e^y n(t,y)dy,~~F_-(t)=\int_{q(t,a)}^{q(t,b)}e^{-y} n(t,y)dy.
\end{align*}
Apparently, we have 
\begin{align}\label{EF}
\nonumber u(t,x)=\frac{e^{-x}}{2}E_+(t),~~x>q(t,b), \\
\nonumber u(t,x)=\frac{e^{x}}{2}E_-(t),~~x<q(t,a),\\
\nonumber v(t,x)=\frac{e^{-x}}{2}F_+(t),~~x>q(t,b), \\
v(t,x)=\frac{e^{x}}{2}F_-(t),~~x<q(t,a).
\end{align}
Therefore, differentiating both sides of Eq. (\ref{EF}) leads to
\begin{align}
\nonumber \frac{e^{-x}}{2}E_+(t)=u(t,x)=-u_x(t,x)=u_{xx}(t,x),~~x>q(t,b), \\
\nonumber \frac{e^{x}}{2}E_-(t)=u(t,x)=u_x(t,x)=u_{xx}(t,x),~~x<q(t,a),\\
\nonumber \frac{e^{-x}}{2}F_+(t)=v(t,x)=-v_x(t,x)=v_{xx}(t,x),~~x>q(t,b), \\
\frac{e^{x}}{2}F_-(t)=v(t,x)=v_x(t,x)=v_{xx}(t,x),~~x<q(t,a).
\end{align}
Since $u(0,\cdot)$ and $v(0,\cdot)$ are supported in the interval $[a, b]$, this immediately gives us $E_+(0) = E_-(0) = 0$ and $F_+(0) = F_-(0) = 0$.

Due to $m(t,\cdot)$ supported in the interval $[q(t,a), q(t, b)]$, then for each fixed $t$, we have
\begin{align*}
 &\frac{dE_+(t)}{dt}=\int_{q(t,a)}^{q(t,b)}e^y m_t(t,y)dy\\
 &=\int_{-\infty}^{\infty}e^y m_t(t,y)dy\\
  &=\int_{-\infty}^{\infty}[(mH)_y+mH+\frac{1}{2}(u-u_y)(v+v_y)m]e^ydy\\
 &=\int_{-\infty}^{\infty}\frac{1}{2}(u-u_y)(v+v_y)me^ydy>0.
\end{align*}
Furthermore,
\begin{align*}
 &\frac{dF_-(t)}{dt}=\int_{q(t,a)}^{q(t,b)}e^{-y} n_t(t,y)dy\\
 &=\int_{-\infty}^{\infty}e^{-y} n_t(t,y)dy\\
 &=\int_{-\infty}^{\infty}[(nH)_y-nH-\frac{1}{2}(u-u_y)(v+v_y)n]e^{-y}dy\\
&=-\int_{-\infty}^{\infty}\frac{1}{2}(u-u_y)(v+v_y)ne^{-y}ydy<0,
\end{align*}
where the strict monotonicity described above follows from our assumption that the solution is
nontrivial. 
With a completely similar procedure to the above proofs, we can also achieve the monotonicity results of $E_+$ and $F_-$ in other
three cases.
Thus, we complete the proof of Theorem 5.1.
\end{proof}

By the fine structure of the two-component system (\ref{special1}), we may determin the sign of the solution  $u$ at positive infinity, and the sign of $v$ at negative infinity through investigating
the monotonicity of $E_+$ and $F_-$, respectively.
However, the properties of the solution of $u$ at negative infinity, and $v$ at positive infinity can not be determined due to the unknown monotonicity of $E_-$ and $F_+$. The main obstacle is the existence of the
arbitrary function $H$ in (\ref{special1}). But, we may choose some special polynomial function $H$ so as to
guarantee the monotonicity of $E_+$, $F_-$, $F_+$, and $E_-$.

 Letting $H=-\frac{1}{2}(u-u_x)(v+v_x)$ sends Eq. (\ref{special1}) to the SQQ equation (\ref{2CH2}).
 We now propose the Cauchy problem for the SQQ equation as follows:
 \begin{equation}\label{special2}
\left\{\begin{array}{ll}m_{t}+\frac{1}{2}[m(u-u_x)(v+v_x)]_x=0 ,&t > 0,\,x\in\mathbb{R}, \\
n_{t}+\frac{1}{2}[n(u-u_x)(v+v_x)]_x=0, &t>0, x\in\mathbb{R},\\
m(0,x)=m_0,~n(0,x)=n_0, &x\in\mathbb{R}.
\end{array}\right.
\end{equation}
 Substituting $H=-\frac{1}{2}(u-u_x)(v+v_x)$ into (\ref{qH}) yields 
 \begin{equation}\label{q}
\left\{\begin{array}{ll}q_{t}(t,x) = \frac{1}{2}(u-u_x)(v+v_x)(t,q(t,x)) ,&t \in [0,T),\,x\in \mathbb{R}, \\
q(0,x) = x, &x\in \mathbb{R},\end{array}\right.
\end{equation}
Therefore, 
Lemma \ref{lemma1} and Lemma \ref{lemma2} recover Lemma \ref{lemma11} and Lemma \ref{lemma22}, respectively, which were studied in \cite{ZY}.
 \begin{lemma1}\label{lemma11}
 Let $z_{0}\in H^{s}\times H^{s}$ and $s \geq 3$. Then Eq.(\ref{q}) has a
unique solution $q \in C^{1}([0,T)\times\mathbb{R},\mathbb{R}).$
Moreover, in $(t,x)\in[0,T)\times\mathbb{R}$ the map $q(t,\cdot)$ is an increasing diffeomorphism over
$\mathbb{R}$ with
\begin{align}
q_{x}(t,x)=\exp\left(\int_{0}^{t}\frac{1}{2}[m(v+v_x)-n(u-u_x)](s,q(s,x))ds\right)>0.
\end{align}
\end{lemma1}

\begin{lemma1}\label{lemma22}
Let $z_{0}\in H^{s}\times H^{s}$, $s \geq 3$, and $T>0$ be the
maximal existence time of the solution $z$ corresponding to Eq.(\ref{special2}).
Then for all $(t,x)\in
[0,T)\times\mathbb{R}$, we have
\begin{eqnarray}
 m(t,q(t,x))q_{x}(t,x) = m_{0}(x), \\
 n(t,q(t,x))q_{x}(t,x) = n_{0}(x).
\end{eqnarray}
 \end{lemma1}

Let us now give the unique continuation for the SQQ system (\ref{special2}).
\begin{theorem1}\label{good1}
Let $z=(u,v)^T\in C([0,T),H^s)\times C([0,T),H^s),s> \frac{5}{2},$ be a nontrivial solution of (\ref{special2}) with the maximal existence time
$T > 0,$ which has an initial data compactly supported on the interval $[a, b].$ Then, we have
\begin{equation}
u(t,x)=\left\{\begin{array}{ll}\frac{1}{2}E_+(t)e^{-x} ,&x>q(t,b), \\
\frac{1}{2}E_-(t)e^{x} ,&x<q(t,a),\\
\end{array}\right.
\end{equation}
\begin{equation}
v(t,x)=\left\{\begin{array}{ll}\frac{1}{2}F_+(t)e^{-x} ,&x>q(t,b), \\
\frac{1}{2}F_-(t)e^{x} ,&x<q(t,a),\\
\end{array}\right.
\end{equation}
where $E_+(t):=\int_{q(t,a)}^{q(t,b)}e^y m(t,y)dy$ and $E_-(t):=\int_{q(t,a)}^{q(t,b)}e^{-y} m(t,y)dy$, $F_+(t):=\int_{q(t,a)}^{q(t,b)}e^y n(t,y)dy$ and $F_-(t):=\int_{q(t,a)}^{q(t,b)}e^{-y} n(t,y)dy.$
Moreover, $E_+(t)$, $E_-(t)$, $F_+(t)$ and $F_-(t)$
are continuously non-vanishing  with $E_+(0) = E_-(0)=F_+(0)=F_-(0)= 0$, and 
the monotonicity of $E_+$, $E_-$, $F_+$ and $F_-$ is displayed in the following four cases:

(1) If $m_0 \geq 0$ and $n_0\geq 0$, then $E_+$ and $F_+$ are strictly
increasing, while $E_-$ and $F_-$ are strictly decreasing for $t\in [0, T).$

(2) If $m_0\leq 0$ and $n_0\leq 0$, then $E_+$ and $F_+$ are strictly
decreasing,
 while $E_-$ and $F_-$ are strictly increasing for $t\in [0, T).$

(3) If $m_0 \geq 0$ and $n_0\leq 0$, then both $E_+$ and $F_-$ are strictly
decreasing, while $E_-$ and $F_+$ are strictly increasing for $t\in [0, T).$

(4) If $m_0 \leq 0$ and $n_0\geq 0$, then both $E_+$ and $F_-$ are strictly
increasing, while $E_-$ and $F_+$ are strictly decreasing for $t\in [0, T).$
\end{theorem1}
\begin{proof}
As discussed above, it is sufficient for us to prove the monotonicity of $F_+$ and $E_-$ only. Since $n(t,\cdot)$ is compactly supported in the interval $[q(t,a), q(t, b)]$, for each fixed $t$ we have
\begin{align}
 \frac{dF_+(t)}{dt}=\int_{q(t,a)}^{q(t,b)}e^y n_t(t,y)dy=\int_{-\infty}^{\infty}e^y n_t(t,y)dy,
\end{align}
which implies 
\begin{align*}
 &\frac{dF_+(t)}{dt}=\int_{q(t,a)}^{q(t,b)}e^y n_t(t,y)dy\\
 &=\int_{-\infty}^{\infty}e^y n_t(t,y)dy\\
 &=-\int_{-\infty}^{\infty}\frac{1}{2}[(u-u_y)(v+v_y)n]_ye^ydy\\
 &=\int_{-\infty}^{\infty}\frac{1}{2}(u-u_y)(v+v_y)ne^ydy>0.
\end{align*}
A similar way yields
\begin{align*}
 &\frac{dE_-(t)}{dt}=\int_{q(t,a)}^{q(t,b)}e^{-y} m_t(t,y)dy\\
 &=\int_{-\infty}^{\infty}e^{-y} m_t(t,y)dy\\
 &=-\int_{-\infty}^{\infty}\frac{1}{2}[((u-u_y)(v+v_y)m]_ye^{-y}dy\\
 &=-\int_{-\infty}^{\infty}\frac{1}{2}(u-u_y)(v+v_y)me^{-y}ydy< 0.
\end{align*}
The other three cases can be proved analogously.
This completes the proof of Theorem \ref{good1}.
\end{proof}

\bigskip
\noindent\textbf{Acknowledgments}


This work  was partially supported by the National Natural Science Foundation of China (Nos. 11401223, 11171295 and 61328103), NSF of Guangdong (No. 2015A030313424) and China Scholarship Council. The first author would like to thank Dr. Zhijun Qiao for his kind hospitality and encouragement during her visit in the University of Texas - Rio Grande Valley, and also thank so much to Dr. Qiao's team members for their fruitful discussion.  The second author also thanks the Haitian Scholar Plan of Dalian University of Technology, the China state administration of foreign experts affairs system under the affiliation of China University of Mining and Technology, and the U.S. Department of Education GAANN project (P200A120256) for their cooperation in conducting the research program.

\end{document}